\newtheorem{theorem}{Theorem}[section]
\newtheorem{lem}[theorem]{Lemma}
\newtheorem{prop}[theorem]{Proposition}
\newtheorem{conjecture}[theorem]{Conjecture}
\newtheorem{proof}[theorem]{Proof}
\newcommand{\EP}{\mbox{EAST$^+$}}
\def\argmin{\mathop{\arg\,\min}\limits}
\begin{document}

\title{Signal Reconstruction from Rechargeable Wireless Sensor Networks using Sparse Random Projections}


\markboth{IEEE Sensors Journal,~Vol.~XX, No.~XX, Month~20XX}%
{Shell \MakeLowercase{\textit{et al.}}: Bare Demo of IEEEtran.cls for Computer Society Journals}


\author{Rajib~Rana, Wen Hu,~\IEEEmembership{Senior Member,~IEEE,} Chun Tung Chou,~\IEEEmembership{Member,~IEEE} 
\thanks{Rajib Rana is with the department of Computational Informatics, CSIRO, Australia (rajib.rana@csiro.au) }%
\thanks{Chun Tung Chou is with the department of Computational Informatics, CSIRO, Australia (mohan.Karunanthi@csiro.au)}%
\thanks{Wen Hu is with the department of Computational Informatics, CSIRO, Australia (e-mail: wen.hu@csiro.au).}}

\maketitle
\begin{abstract}
Due to non-homogeneous spread of sunlight, sensing nodes possess non-uniform energy budget in rechargeable Wireless Sensor Networks (WSNs). 
An energy-aware workload distribution strategy is therefore necessary to achieve good data accuracy subject to  energy-neutral operation. Our previously proposed Energy Aware Sparse approximation Technique (EAST) can approximate a signal, by adapting sensor node sampling workload according to solar energy availability. However, the major shortcoming of EAST is that it does not guarantee an optimal sensing strategy. In other words EAST offers energy neutral operation, however it does not offer the best utilization of sensor node energy, which compromises the reconstruction accuracy.  In order to overcome this shortcoming, we propose EAST$^+$ which, maximizes the reconstruction accuracy subject to energy neutral operations. We also propose a distributed algorithm for  \EP, which offers accurate signal reconstruction with limited node to-base communications. 
\end{abstract}

\begin{IEEEkeywords}
Rechargeable Wireless Sensor Networks, Sparse Approximation, Energy-aware Sensing, Energy-neutral Operations.
\end{IEEEkeywords}
\IEEEpeerreviewmaketitle

\section{Introduction}
Wireless Sensor Networks (WSNs) are currently deployed to monitor micro-climate data from different environments \cite{duckisland,springbrook}. The Springbrook National Park WSN is one such example~\cite{springbrook}. The Springbrook site is part of a World Heritage precinct in Queensland, Australia. CSIRO\footnote{The Commonwealth Scientific and Industrial Research Organization}, in partnership with the Queensland Government Environmental Protection Agency (EPA), has deployed a WSN of $200$ nodes at Springbrook in $2011$. This partnership aims to collect microclimate data for enhancing knowledge of the rain forest restoration process.

Energy supply is a major design constraint in the Springbrook deployment. In the last few years, a large number of research has been conducted (see \cite{energy_conserve} for the comprehensive list) to reduce the radio's duty cycle. However, recently it has been reported that many real life applications require specific sensors whose power consumption is significant. In addition, longer acquisition times of some specific sensors may even result in significantly higher energy consumptions than the radio transceiver.
In order to cope with the increasing energy demand, a number of sensor deployments are adopting a complementary approach of supplementing the energy supply of the system by harvesting additional energy from the environment~\cite{springbrook,tosn_wen,zhao2011joint,yerva2012grafting}.
%

Out of the variety of energy harvesting modalities, we use solar energy harvesting modality. Solar energy provides one of the highest power densities~\cite{solar}, however, sun light is not homogeneously spread over the network, which results in non-homogeneous energy level in the sensing nodes. Therefore, a sensing task allocation technique that assumes uniform energy profile of the sensing nodes could deplete the energy of a number of nodes and create holes in the network connectivity or coverage. In order to avoid such situation, an early attempt in the Springbrook deployment reduces the fraction of time the sensors are turned on to take samples (we refer this quantity as \emph{sensor on-time}) to less than $2\%$ for all nodes, which results in poor approximation of the signal.

Data collected from wireless sensor deployments are typically correlated and therefore compressible~\cite{cws} in an appropriate transform. 
If the data is compressible, a signal vector with $\hat{N}$ data values can be well approximated using only $k(<<\hat{N})$ orthonormal transform coefficients. If these $k$ largest coefficients could be determined from a small number of measurements, where measurements are collected with high probability from energy-rich sensing nodes and with smaller probability from energy-constrained nodes, we can approximate the signal with good accuracy at energy neutral condition. An energy neutral operation means that the energy consumption should be less than the energy harvested from the environment. Recently proposed techniques based on compressive sensing~(\cite{sivapalan2011compressive,cws,xu2011dynamic,chew2012sparse,rana2011adaptive,wei2012distributed}) have so far assumed that the signal is sampled uniformly. However, in order to approximate a signal with good accuracy from the rechargeable WSNs, non-uniform sampling strategies need to be developed. Our recently proposed Energy-aware Sparse Approximation Technique (EAST)~\cite{rana2010energy} offers a non-uniform sensing strategy for rechargeable wireless sensor networks, however, it is not optimal. In other words, it does not offer the best accuracy by maximizing the energy utilization of the nodes at energy-neutral condition. Therefore, methods need to be developed to maximize reconstruction accuracy subject to energy neutral operations. In this paper, we address these challenges. Our contributions are as follows:
\begin{enumerate}\itemindent 10pt
\item  We derive \EP, which offers a {\bf non-uniform} and  {\bf optimal} sensing strategy to minimize the approximation error, while preserving the energy neutral operation. 
\item We translate \EP into a distributed algorithm which can be readily implemented in the sensor field to reconstruct signal using very small number of node-to-base communications.

\item  We conduct both analytical and empirical studies to demonstrate that \EP offers the optimal sensing strategy.

\end{enumerate}

The reminder of the paper is organized as follows. In Section~\ref{sec:setup} we formulate the problem and provide necessary contextual information required for problem formulation. Then in the next section (Section~\ref{sec:eastPlus}) we describe \EP.
We describe a distributed algorithm for \EP in Section~\ref{distributed} and provide the evaluation results in Section~\ref{sec:evaluation}. Finally, we discuss the related literature in Section~\ref{sec:related_work} and then conclude in Section~\ref{sec:conclusion}.

\section{Problem Formulation}
\label{sec:setup}
Consider a WSN with $N$ nodes. Let $u\in \mathbb{R}^{M \times N}$ be the signal matrix where $u_{(h,j)}$ is the measurement of sensor node $n_j$ at time $t_h$. Assume that the network is rechargeable using solar energy. Define $E_j$ to be the amount of energy harvested by node $n_j$ during $t_{1 \leq h \leq M}$. In the rest of the paper we refer to $E_j$ as the energy profile of the node. Due to non-homogeneous spread of sunlight, $E_j$ of different nodes can be very different. For example, nodes in the open space can have higher $E_j$ whereas nodes in the forest can have smaller $E_j$. Let us define an indicator variable
\begin{equation}
\text{$f_{hj}$ = }
\begin{cases} \text{1,}  & \text{if sensor $n_j$ is turned on at time $t_h$}
\\
\text{0,}  & \text{otherwise.}
\end{cases}
\end{equation}
In order to ensure an energy neutral operation, we turn on sensor $n_j$\footnote{Note that, in WSN literature, a sensor can be used to refer to a \emph{sensor node} (which includes a CPU, a radio and measurement sensors) or a \emph{measurement sensor} (e.g. a temperature sensor, a wind speed sensor). In this paper, we refer to \emph{turning on sensor $n_j$} as to turning on the measurement sensor on node $n_j$} based on its energy profile $E_j$. Therefore, some of the values of $f_{hj}$ could be zero. The value of the signal $u$ at time instances $t_h$ where $f_{hj} = 0$ are not measured. Therefore, we need a method to estimate those components in $u$ that have not been measured.

In order to further explain the problem, in the following we will define two terms: compressible data and sparse random projections. 
For ease of presentation, we will assume $M = 1$ for the rest of this section as well as in Section~\ref{sec:eastPlus}. This means that $u$ is a $1$-dimensional vector and the $j$-th component of $u$ is the measurement from sensor $n_j$. 

\subsection{Compressible Data}
Data collected from wireless sensor deployments are typically correlated and therefore compressible in an appropriate transform~\cite{cws}, such as, the Wavelets or the Discrete Fourier Transform. Let us consider a transform $\Psi \in \mathbb{R}^{N \times N}$ whose columns form a set of $N$ orthonormal basis vectors $\{\psi_1,...,\psi_N\}$. The transform coefficient vector of the signal $u \in \mathbb{R}^N$ is given by $\Psi^T u$ where $^{T}$ denotes matrix transform. The signal $u$ is compressible, if the reordered transform coefficients $\theta_\pi$
decay like power law~\cite{practical_recovery}.  That is the $\pi$-th largest transform coefficient satisfies
\begin{eqnarray}
|\theta|_{(\pi)} \leq R \pi^{-\frac{1}{s}},  \forall: 1 \leq \pi \leq N
\label{compressibility}
\end{eqnarray}
Here $R$ is a constant, and $0 < s \leq 1$. We will call $s$ the \emph{compressibility parameter}.

Given a signal vector $u$ is compressible, the largest $k$ transform coefficients capture most of the signal information in the following sense: let the vector $\hat{\theta}_k$ be obtained from setting the smallest $(N-k)$ coefficients in $\theta$ to zero and let $\hat{u}^k = \Psi \hat{\theta}_k$, then $u \approx \hat{u}_k$ provided that the $(N-k)$ coefficients that are set to zero have small magnitude compared with those $k$ coefficients that are retained. Thus, if a signal is known to be compressible in a particular transform domain, the signal can be well approximated by recovering only the $k$ largest transform coefficients. The approximation $\hat{u}_k$ that keeps the $k$ largest transform coefficients and discards the remaining as zero is called the \emph{best $k$-term approximation}. 

The underlying hypothesis of our proposed sensing strategy is that the data collected at the energy-constrained nodes are correlated to the data collected at the energy-rich nodes. In order to ensure that we do not exhaust the available energy on the energy-constrained nodes, our framework will demand the energy-rich nodes to sample more often the energy-constrained nodes. This means that some of the measurements from the energy-constrained nodes are not available and have to be estimated. We do this by using the available measurements to estimate the $k$-largest coefficients of the complete signal in an appropriate transform domain. Once these $k$ coefficients are available, the unmeasured data can be estimated because these $k$ coefficients capture most of the signal information. 
Therefore, precisely, we want: (1) To adjust the sampling rate of a node according to its energy profile; and, (2) To be able to estimate unmeasured data from the measured one \footnote{Note: We will also refer to this process as {\sl reconstructing} the signal vector $u$ using the terminology from compressive sensing.}. We will realize these two goals by using {\sl Sparse Random Projections} as an intermediate tool. 

\subsection{Sparse Random Projections}
A projection is defined as the dot product $\phi^T u$ of a data vector $u$ and a projection vector $\phi$. When more than one projection is taken, projection vectors are packed in the rows of a matrix and a projection matrix is formed. 
An example of a projection matrix is given in \eqref{eqn:sparse_random_projections} where $\Phi_{i,j}$ is the $(i,j)$-element of the projection matrix $\Phi \in \mathbb{R}^{\ell \times N}$ with the $i$-th row of $\Phi$ containing the $i$-th projection vector.
\begin{eqnarray}
\Phi_{i,j}=\sqrt{\rho}
\begin{cases} +1 & \text{with probability $\frac{1}{2\rho}$,}
\\
0 & \text{with probability $1-\frac{1}{\rho}$,}
\\
-1 &\text{with probability $\frac{1}{2\rho}$.}
\end{cases}
\label{eqn:sparse_random_projections}
\end{eqnarray}

The projection between the projection vector $\Phi$ and signal vector $u$ is the $\ell$-dimensional vector $\Phi u$, where $\ell$ is generally smaller than $N$. We mention earlier that the projection matrix is an intermediate tool that we use to estimate the unmeasured data from the measured ones. We will see later that $\Phi u$ can be obtained from measured data and the task is to estimate $u$ from $\Phi u$. We now move on to discuss the importance of sparsity in $\Phi$.

In~\eqref{eqn:sparse_random_projections}, the sparsity parameter $\rho$ determines sparsity of the projection matrix. For example, when $\rho$=1 the matrix is dense because all elements of the matrix $\Phi$ are non-zero. When $\rho$ equal to 3, then the matrix is sparse, since, on average two-third of the elements of the projection matrix are zero. Sparsity of the projection matrix can be used to control sensor sampling patterns. It can readily be seen that the contribution of the $j$-th element of $u$ (which is be denoted by $u_j$ and represents the measurement from sensor $n_j$) to the projection $\Phi u$ is via the $j$-th column of $\Phi$. For a sparse matrix, it is possible that the $j$-th column contains all zeros, i.e. $\Phi_{ij} = 0 \forall i$, for some $j$. If this is the case, then $u_j$ is not needed to compute $\Phi u$, which in turn means that sensor $n_j$ does not need to sample.
It can be shown that the mean number of sensors that are required to sample is given by $N (1-(1-\frac{1}{\rho})^\ell)$, which can be shown to be bounded by $\frac{N \ell}{\rho}$. For $\frac{1}{\rho}=\frac{1}{N}$, this means at most $\ell$ samples are required. 

Given that there is only one parameter $\rho$ in \eqref{eqn:sparse_random_projections}, all sensors have the same probability to sample at any time instance. However, such uniform sampling strategy is not appropriate for WSNs with heterogeneous energy profile because one either has to lower the sampling rate of all nodes or leave the energy-constrained nodes in-operational for an extended period of time. In order to deal with heterogeneous energy profile, in the next section, we will generalize the sparse projection matrix so that each sensor can control its sampling probability depending on its energy profile. 

To this end we seek to achieve two key objectives: first, we want to model a projection matrix ($\Phi$)
to ensure energy neutral operation. 
Second, given the projection matrix ($\Phi$), we want to 
formulate a method for successful reconstruction of the sensed phenomena.

\section{Optimal Signal Reconstruction - \EP}
\label{sec:eastPlus}
\subsection{Modelling Projection Matrix $\Phi$ }
\label{sec:pro_matrix}
Consider a projection matrix $\Phi \in {\mathbb R}^{\ell \times N}$ whose elements $\Phi_{ij}$ has the following probability mass function:
\begin{equation}
\Phi_{ij}=\sqrt{\frac{1}{ g_j}}
\begin{cases} +1 & \text{with prob. $\frac{ g_j}{2}$}
\\
0 &\text{with prob. $1-{g_j}$}
\\
-1 &\text{with prob. $\frac{ g_j}{2}.$}
\end{cases}
\label{eqn:sparse_approx}
\end{equation}
Here $g_j=\frac{E_j}{\Sigma_{j=1}^N E_j}\kappa$ defines the probability of a measurement from sensor $n_j$
to be included in the $i$-th projection. The parameter $\kappa$ is referred to as the \emph{sampling parameter}. It is bounded by: $0<\kappa\leq1$. Probability of measurement ($g_j$) is proportional to the energy profile ($E_j$), therefore, higher energy profile of a node will increase the probability of inclusion of measurement from the node. The parameter $g_j$ is also proportional to the sampling parameter $\kappa$. In Section~\ref{sec:non-uni-sampling} we determine the optimal value of $\kappa$ that minimizes the reconstruction error. 

If $\Phi_{ij}\ne 0$, we want measurement from sensor $n_j$ to be included in the $i$-th projection. Therefore, if at least one of the $\Phi_{ij}$ is non-zero, then sensor $n_j$ will need to collect a sample. It can be shown that this happens with a probability of $1 - (1 - g_j)^\ell$. 
\subsection{Derivation of \EP}
\label{sec:non-uni-sampling}
%
%
In order to accurately recover the signal from sparse random projections, we use a simplified sketching decoder~\cite{distributed}. The prerequisite of successful recovery using sketching decoder is that
the signal needs to 
satisfy the peak-to-total energy condition. In this condition the ratio of the peak ($||u||_\infty$) to total ($||u||_2$) energy of the signal should be upper bounded by the parameter $\mu$.
The parameter $\mu$ is related to the compressibility of the signal. If signal $u$ is compressible in a transform with compressibility parameter $s$, then $\mu$ is given by
\begin{equation}
\frac{||u||_\infty}{||u||_2} \leq \mu=
\begin{cases} O(\frac{log N}{ \sqrt(N)}) & \text{if $s=1$}
\\
O(\frac{1}{ \sqrt(N)}) & \text{if $0<s<1.$}
\end{cases}
\label{eqn:cond_compress}
\end{equation}
This condition implies that the energy of the signal is not concentrated on only a few elements. In particular, if the signal is too sparse, sparse random projects may not work.

%
%
\begin{prop}
\label{prop:1}
Let $\Phi$ be the projection matrix given by Equation~(\ref{eqn:sparse_approx}). Define $x=\frac{1}{\sqrt{\ell}}\Phi u$ and $y=\frac{1}{\sqrt{\ell}}\Phi v\in \mathbb{R}^{\ell}$ as the random projection of two vectors $u$ and $v\in\mathbb{R}^\ell$. Expectation and variance of the inner product of $x$ and $y$ are respectively
\begin{eqnarray}
\mathbb{E}\left[x^{T} y \right] & = & u^T v \hspace{1cm}\text{and}\nonumber \\
Var \left(x^{T} y \right) & =  & \frac{1}{\ell}(\left(u^T v\right)^2+||u||_2^2||v||_2^2+ \Sigma_{j=1}^N \frac{ 1}{g_j}u_j^2v_j^2\nonumber\\&-&3\Sigma_{j=1}^Nu_j^2v_j^2) \nonumber.
\end{eqnarray}
\end{prop}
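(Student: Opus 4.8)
The plan is to treat this purely as a moment computation on the independent entries of $\Phi$. First I would record the one-entry moments read off from \eqref{eqn:sparse_approx}: $\mathbb{E}[\Phi_{ij}]=0$, $\mathbb{E}[\Phi_{ij}^2]=\frac{1}{g_j}\,g_j=1$, $\mathbb{E}[\Phi_{ij}^3]=0$ (by the symmetry of the $\pm1$ part), and $\mathbb{E}[\Phi_{ij}^4]=\frac{1}{g_j^2}\,g_j=\frac{1}{g_j}$, together with the fact that all entries $\{\Phi_{ij}\}$ are mutually independent. Writing $x_i=\frac{1}{\sqrt{\ell}}\sum_j\Phi_{ij}u_j$ and $y_i=\frac{1}{\sqrt{\ell}}\sum_j\Phi_{ij}v_j$ (here $v$ is read as a vector in $\mathbb{R}^N$, so that $\Phi v$ is defined), the structural observation is that the products $Z_i:=x_iy_i$ are i.i.d.\ across the $\ell$ rows of $\Phi$; hence $x^Ty=\sum_{i=1}^\ell Z_i$ gives $\mathbb{E}[x^Ty]=\ell\,\mathbb{E}[Z_1]$ and $\mathrm{Var}(x^Ty)=\ell\,\mathrm{Var}(Z_1)$, so everything reduces to the first two moments of $Z_1$.

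For the expectation, expanding $Z_1=\frac{1}{\ell}\sum_{j,k}\Phi_{1j}\Phi_{1k}u_jv_k$ and using independence and zero mean kills every $j\ne k$ term, leaving $\mathbb{E}[Z_1]=\frac{1}{\ell}\sum_j\mathbb{E}[\Phi_{1j}^2]u_jv_j=\frac{1}{\ell}u^Tv$, hence $\mathbb{E}[x^Ty]=u^Tv$.

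For the variance I would expand $\ell^2Z_1^2=\sum_{a,b,c,d}\Phi_{1a}\Phi_{1b}\Phi_{1c}\Phi_{1d}\,u_au_bv_cv_d$ and take expectations term by term, noting that $u$ sits on indices $a,b$ and $v$ on indices $c,d$. Because the row entries are independent, zero-mean and have zero third moment, $\mathbb{E}[\Phi_{1a}\Phi_{1b}\Phi_{1c}\Phi_{1d}]$ vanishes unless every index value appears an even number of times; the surviving configurations are (i) the full diagonal $a=b=c=d$, contributing $\mathbb{E}[\Phi_{1a}^4]=1/g_a$, and (ii) the three ``$2{+}2$'' pairings $(a{=}b,\,c{=}d)$, $(a{=}c,\,b{=}d)$, $(a{=}d,\,b{=}c)$ with the two index values distinct, each contributing $\mathbb{E}[\Phi^2]^2=1$. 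Summing the weights gives the diagonal piece $\sum_j\frac{1}{g_j}u_j^2v_j^2$; the piece $\sum_{a\ne c}u_a^2v_c^2=\|u\|_2^2\|v\|_2^2-\sum_ju_j^2v_j^2$ from the first pairing; and $\sum_{a\ne b}u_av_au_bv_b=(u^Tv)^2-\sum_ju_j^2v_j^2$ from each of the other two. Collecting, $\ell^2\mathbb{E}[Z_1^2]=\sum_j\frac{1}{g_j}u_j^2v_j^2+\|u\|_2^2\|v\|_2^2+2(u^Tv)^2-3\sum_ju_j^2v_j^2$, and then $\mathrm{Var}(x^Ty)=\ell\big(\mathbb{E}[Z_1^2]-\mathbb{E}[Z_1]^2\big)$ subtracts one copy of $(u^Tv)^2$ and carries the overall factor $\ell/\ell^2=1/\ell$, producing exactly the stated formula.

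The only genuine obstacle is the bookkeeping in the last step: correctly enumerating the equality patterns of $(a,b,c,d)$, keeping the all-equal diagonal separate from the three off-diagonal $2{+}2$ pairings so that nothing is double counted, and matching each pattern to the right monomial in $u$ and $v$ (in particular that $(a{=}b,c{=}d)$ yields $u_a^2v_c^2$ whereas the other two pairings each yield $u_av_au_bv_b$). Everything else is routine algebra.
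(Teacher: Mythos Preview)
Your proposal is correct and follows essentially the same approach as the paper: both define per-row variables (your $Z_i$ is the paper's $w_i/\ell$), invoke the entrywise moment identities $\mathbb{E}[\Phi_{ij}]=0$, $\mathbb{E}[\Phi_{ij}^2]=1$, $\mathbb{E}[\Phi_{ij}^4]=1/g_j$, compute the first two moments by expanding and using independence, and then sum over the $\ell$ i.i.d.\ rows. The only cosmetic difference is that you organize the fourth-moment computation by enumerating the equality patterns of $(a,b,c,d)$, whereas the paper first splits $w_i=\sum_j u_jv_j\Phi_{ij}^2+\sum_{l\neq m}u_lv_m\Phi_{il}\Phi_{im}$ and then squares; the resulting bookkeeping and final expression are identical.
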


For proof see the Appendix. It can be observed that the variance of the estimation is inversely related to $g_j$. Thus, if $g_j$ is small, the estimation will have high variance. Note that $g_j$ is also proportional to the energy profile $E_j$. Therefore when all the nodes have good access to sunlight, good estimation can be produced.
In~\cite{distributed} it is shown that the variance of this estimation is controlled by the number of projections ($\ell$) only; and it is not shown how the variance will be changed if the nodes have non-uniform energy profile.

%

\begin{prop}
\label{prop:3}
Assume data $u \in \mathbb{R}^{N}$ satisfies the peak-to-total energy condition~(\ref{eqn:cond_compress}), and with
\begin{eqnarray}
\ell= 48\frac{(2+\mu^2\max_j\frac{1}{g_j})k^2{(1+\gamma)\log{N}}}{c^2\epsilon^2\eta^2}
\label{eq:ellTerm}
\end{eqnarray}
the sparse random matrix$\hspace{.1cm}\Phi \in \mathbb{R}^{\ell \times N}$ satisfies condition
\begin{eqnarray}
\mathbb{E}\left[\Phi_{ij}\right]=0 \mbox{,}\hspace{.05cm} \mathbb{E}\left[\Phi_{ij}^2\right]=1 \mbox{,}\hspace{.05cm} \mathbb{E}\left[\Phi_{ij}^4\right]=\frac{1}{g_j}.\label{eqn:cond_expection}
\end{eqnarray}
Denote $x=\frac{1}{\sqrt{\ell}}\Phi u$ as the sparse random projection of $u$ and $\Psi \in \mathbb{R}^{N \times N}$ as an orthonormal transform. Transform coefficients of $u$ in $\Psi$ is given by, $\theta=\Psi^{-1} u$. Assume the best $k$-term approximation gives an approximation ($\hat{u}_{opt}$) with error $||u-\hat{u}_{opt}||_2^2 \leq \eta||u||_2^2$. Using only $x$, $\Phi$ and $\Psi$, $u$ can be recovered with error
\begin{eqnarray}
\frac{||u-\hat{u}||_2^2}{||u||_2^2} \leq (1+\epsilon)\eta
\label{eqn:error_metric}
\end{eqnarray}
with probability at least $1-N^{-\gamma}$.
\end{prop}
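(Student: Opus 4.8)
The plan is to build an unbiased, low-variance estimator for each transform coefficient $\theta_i=\psi_i^{T}u$ out of the available sketch $x$, boost it to exponential concentration by a median-of-averages trick, and then show that keeping the $k$ largest estimated coefficients (the simplified sketching decoder of~\cite{distributed}) reproduces the best $k$-term error up to the factor $(1+\epsilon)$.

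First I would fix an index $i$ and define the sketch estimate $\tilde\theta_i=x^{T}\!\left(\tfrac{1}{\sqrt{\ell}}\Phi\psi_i\right)$, which is computable from $x$, $\Phi$ and $\Psi$ alone. Applying Proposition~\ref{prop:1} with $v=\psi_i$ gives $\mathbb{E}[\tilde\theta_i]=u^{T}\psi_i=\theta_i$, so the estimate is unbiased, and it also yields an exact expression for $Var(\tilde\theta_i)$. The crucial step is to bound this variance: using $\|\psi_i\|_2=1$, the Cauchy--Schwarz bound $(u^{T}\psi_i)^2\le\|u\|_2^2$, the elementary inequality $\sum_j\tfrac{1}{g_j}u_j^2\psi_{ij}^2\le\big(\max_j\tfrac{1}{g_j}\big)\|u\|_\infty^2$, and then the peak-to-total energy hypothesis $\|u\|_\infty^2\le\mu^2\|u\|_2^2$, I obtain $Var(\tilde\theta_i)\le\tfrac{1}{\ell}\big(2+\mu^2\max_j\tfrac{1}{g_j}\big)\|u\|_2^2$. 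This is precisely where condition~\eqref{eqn:cond_compress} enters, and it is the origin of the factor appearing in~\eqref{eq:ellTerm}; note that the bound degrades gracefully as some $g_j$ (hence some $E_j$) shrinks, which is the quantitative price of a heterogeneous energy profile.

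Second I would amplify concentration by the standard median-of-averages device: partition the $\ell$ rows of $\Phi$ into $b=\Theta((1+\gamma)\log N)$ equal blocks, average the per-block estimates (each block estimate has variance $\tfrac{b}{\ell}$ times the single-row variance), and take the entrywise median of the $b$ block estimates. Chebyshev puts each block estimate within $\delta:=\Theta\!\big(\sqrt{b\,Var_1/\ell}\,\big)$ of $\theta_i$ with probability at least $3/4$, and a Chernoff bound on the number of ``good'' blocks then pushes the failure probability for coordinate $i$ below $N^{-(1+\gamma)}$; a union bound over the $N$ coordinates yields, with probability at least $1-N^{-\gamma}$, the uniform estimate $\max_i|\tilde\theta_i-\theta_i|\le\delta$. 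Substituting the value of $\ell$ from~\eqref{eq:ellTerm} makes $\delta$ at most a suitable constant multiple of $\tfrac{c\,\epsilon\,\eta}{k}\|u\|_2$.

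Finally, let $\hat u=\Psi\hat\theta$, where $\hat\theta$ keeps the $k$ largest-magnitude entries of the estimated coefficient vector and zeroes the rest; since $\Psi$ is orthonormal, $\|u-\hat u\|_2=\|\theta-\hat\theta\|_2$. I would split this error into the mass of $\theta$ outside the retained support plus the estimation error ($\le k\delta^2$) on the retained support, and then compare the retained support with the optimal top-$k$ support $T_{opt}$ by the usual swap argument: pairing indices in the symmetric difference and using that every retained estimate dominates every discarded one while differing from the truth by at most $\delta$, one controls the extra tail mass by $O(\delta\sqrt{k}\,\|u\|_2)+O(k\delta^2)$. Combining this with the hypothesis $\|u-\hat u_{opt}\|_2^2=\|\theta_{T_{opt}^c}\|_2^2\le\eta\|u\|_2^2$ and the chosen size of $\delta$ gives $\|u-\hat u\|_2^2\le(1+\epsilon)\eta\|u\|_2^2$. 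I expect this last step to be the main obstacle: squeezing a clean $(1+\epsilon)$ rather than a $(2+\epsilon)$ out of the perturbation analysis of top-$k$ thresholding requires a careful swap argument, and matching the precise constants in~\eqref{eq:ellTerm} (the $48$, the $k^2$, the $\epsilon^2\eta^2$) forces the per-coefficient noise level $\delta$ to scale like $1/k$ rather than $1/\sqrt{k}$, which is what produces the quadratic dependence on $k$ and on $\epsilon\eta$.
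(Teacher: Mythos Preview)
Your proposal is correct and follows essentially the same route as the paper: the paper packages your Chebyshev/median-of-blocks/union-bound step as a separate lemma (with the same $\ell=\ell_1\ell_2$ decomposition and the same variance bound via the peak-to-total condition, yielding $\ell_1=4(2+\mu^2\max_j\tfrac{1}{g_j})/\epsilon^2$ and $\ell_2=12(1+\gamma)\log N/c^2$), and then, for the top-$k$ thresholding step that you sketch via a swap argument, simply cites~\cite{distributed} for the fact that per-coefficient error $\beta=O(\epsilon\eta/k)$ suffices to obtain the $(1+\epsilon)\eta$ bound. The only substantive difference is that you actually outline the swap argument, whereas the paper defers it entirely to the cited reference.
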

For proof see the Appendix. 


Using Eq.\eqref{eq:ellTerm}, the error term can be written as, $\epsilon={(48\frac{(2+\mu^2\max_j\frac{1}{g_j})k^2{(1+\gamma)\log{N}}}{c^2\eta^2\ell}})^{1/2}$.
Rrecall that $g_j=\frac{E_j\kappa}{\Sigma_{i=1}^N E_i}$, therefore $max_j \frac{1}{g_j}=\frac{\Sigma_{i=1}^N E_i}{E_{\min}}$, where $E_{\min}$ represents the energy profile of the node with minimum energy. Let $\frac{48(1+\gamma)k^2}{c^2\eta^2}=c_1$ and $\mu^2=c_2$, therefore, the approximation error can be rewritten as: 
\begin{eqnarray}
\epsilon&=&({\frac{c_1\log{N}}{\ell}(2+c_2\frac{\Sigma_{i=1}^N E_i}{E_{\min}\kappa})})^{1/2}.\label{old_opt}
\end{eqnarray}
Our objective is to minimize the approximation error while ensuring energy neutral operation. It can be written as follows:
\begin{eqnarray}
\argmin_{\ell,\kappa}
\frac{c_1\log{N}}{\ell}(2+c_2\frac{\sum_{i=1}^N E_i}{E_{\min}\kappa})\label{objective_function}
\end{eqnarray}
Subject to:
\begin{eqnarray}
(1-(1-\frac{E_{1}\kappa}{\sum_{i=1}^N E_i})^{\ell})c_4&\leq& E_{1} \label{con_1}\\
.\nonumber\\
.\nonumber\\
(1-(1-\frac{E_{N}\kappa}{\sum_{i=1}^N E_i})^{\ell})c_4&\leq& E_{N}.\label{con_3}
\end{eqnarray}
The objective function (\ref{objective_function}) aims to minimize the approximation error. The constraints (\ref{con_1})--(\ref{con_3}) keep each of the nodes within their energy budget. 
The term $(1-(1-\frac{E_1\kappa}{\sum_{i=1}^N E_i})^{\ell})$ is the probability that the node with the energy profile $E_1$ acquires a sample. The constant $c_4=VIT$ is the energy required to acquire a sample where, $V$ is the battery voltage and $I$ and $T$ are the electrical current and  time to acquire a sample, respectively. In order to ensure energy neutral operation, we want to ensure that average energy consumed for sampling $(1-(1-\frac{E_{j}\kappa}{\sum_{i=1}^N E_i})^{\ell})c_4$ is less than the harvested energy $E_j$.

It is possible to reduce the number of constraints in the above optimization problem if we could find a constraint, such that, if this constraint is satisfied, it implies that all other constraints are also satisfied. In other words, we have to find an active constraint. Using the general representation ($E_j-c_4(1-(1-\frac{E_j\kappa}{\sum_{i=1}^N E_i})^{\ell}) \geq 0$) of the constraints (\ref{con_1})-(\ref{con_3}), the number of measurements $\ell$ can be written as 
\begin{eqnarray}
\ell\leq\frac{\log{(1-\frac{E_j}{c_4})}}{\log{(1-\frac{E_j\kappa}{\sum_{i=1}^NE_i})}}.\label{con_l}
\end{eqnarray}
Without loss of generality, we assume that the nodes are ordered in non-decreasing order of energy profiles, i.e. $E_1\leq E_2...\leq E_N$. Under this assumption, if the right-hand-side of (\ref{con_l}) is a non-decreasing function of $E_j$, then only the constraint with $E_j = E_1$ is active. This is because, for the rest of the energy profiles $E_2$ to $E_{N}$, $\ell$ will be less than or equal to the right-hand-side of (\ref{con_l}) for $E_j = E_1$. Similarly, if the right-hand-side of (\ref{con_l}) is a non-increasing function of $E_j$, then the constraint with $E_j=E_N$ would be sufficient. 
In order to determine whether the right-hand-side of (\ref{con_l}) is non-increasing or non-decreasing, we find the derivative of (\ref{con_l}) with respect to $E_j$. If the derivative is positive, the right-hand-side of (\ref{con_l}) is non-decreasing and if it is negative, the right-hand-side is non-increasing. The derivative of the right-hand-side of (\ref{con_l}) is given by 

\begin{eqnarray}
&-&\underbrace{\frac{\log{(1-\frac{E_j}{C_4})(\frac{E_j\kappa}{(\sum_{i=1}^N E_i)^2}-\frac{\kappa}{\sum_{i=1}^N E_i})}}{(1-\frac{E_j\kappa}{\sum_{i=1}^N E_i})\log^2{(1-\frac{E_j\kappa}{\sum_{i=1}^N E_i})}}}_{\mbox{left part}}\nonumber\\&-&\underbrace{\frac{1}{c_4(1-\frac{E_j}{c_4})\log{(1-\frac{E_j\kappa}{\sum_{i=1}^N E_i})}}}_{\mbox{right part}}.\label{deri}
\end{eqnarray}

\begin{conjecture}
\label{cojec:5}
\emph{Let us assume that the harvested energy is less than the consumed energy\footnote{Note that if harvested energy 
is higher than consumed energy, i.e. $E_j > c_4$, then , the problem
is trivial, since energy neutral operation is 
automatically satisfied. 
We instead consider the non-trivial case where harvested energy is less than or equal to 
consumed energy. } i.e. $\forall_j:\mbox{ }E_j<c_4$.} Then the derivative in \eqref{deri} is greater than $0$.
\end{conjecture}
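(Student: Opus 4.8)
The plan is to rewrite the cumbersome derivative \eqref{deri} in compact notation, read off the sign of each of its two summands, and then reduce the claim to an elementary two-sided logarithm inequality. Put $S=\sum_{i=1}^{N}E_i$ and $p=g_j=\frac{E_j\kappa}{S}$; since $0<\kappa\le 1$ and (provided more than one node carries positive energy) $0<E_j<S$, we have $p\in(0,1)$, while $E_j<c_4$ gives $\frac{E_j}{c_4}\in(0,1)$. Write $L_1=-\log\bigl(1-\tfrac{E_j}{c_4}\bigr)>0$ and $L_2=-\log(1-p)>0$. A direct rearrangement (using $\frac{E_j\kappa}{S^2}-\frac{\kappa}{S}=-\frac{\kappa(S-E_j)}{S^2}$) shows that the ``left part'' of \eqref{deri} equals $\frac{\kappa(S-E_j)L_1}{S^2(1-p)L_2^2}$, which is \emph{positive} — hence it enters \eqref{deri} with an overall negative sign — while the ``right part'' equals $\frac{-1}{(c_4-E_j)L_2}$, which is \emph{negative}, hence enters with an overall positive sign. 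So showing that the derivative \eqref{deri} is positive amounts to showing the positive contribution dominates:
\begin{equation*}
\frac{1}{(c_4-E_j)L_2}>\frac{\kappa(S-E_j)L_1}{S^2(1-p)L_2^2}.
\end{equation*}

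Next, since every factor in sight is strictly positive, I would clear denominators to obtain the equivalent inequality
\begin{equation*}
\frac{S^{2}(1-p)\,L_2}{c_4-E_j}>\kappa\,(S-E_j)\,L_1 .
\end{equation*}
The key step is to bound the two logarithms in \emph{opposite} directions using the elementary estimates $x<-\log(1-x)<\frac{x}{1-x}$ valid for $x\in(0,1)$: the lower bound applied to $L_2$ gives $L_2>p=\frac{E_j\kappa}{S}$, which only shrinks the left-hand side, and the upper bound applied to $L_1$ gives $L_1<\frac{E_j}{c_4-E_j}$, which only enlarges the right-hand side. It therefore suffices to check
\begin{equation*}
\frac{S\,E_j\kappa(1-p)}{c_4-E_j}\ge\frac{E_j\kappa\,(S-E_j)}{c_4-E_j},
\end{equation*}
and after cancelling the common positive factor $\frac{E_j\kappa}{c_4-E_j}$ this reads $S(1-p)\ge S-E_j$, i.e. $Sp\le E_j$, i.e. $E_j\kappa\le E_j$ — true since $\kappa\le 1$. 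Because $L_2>p$ strictly, the whole chain is strict, so the derivative \eqref{deri} is strictly positive, as claimed; consequently the right-hand side of \eqref{con_l} is non-decreasing in $E_j$ and only the constraint with $E_j=E_1$ is active.

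The only genuine difficulty is bookkeeping: one must keep careful track of the two leading minus signs in \eqref{deri} together with the signs of $E_j-S$, $\log(1-E_j/c_4)$ and $\log(1-p)$, and — crucially — pick the correct direction for each logarithm estimate (lower bound on $L_2$, upper bound on $L_1$) so that the resulting algebraic inequality closes rather than runs the wrong way. It is also worth stating the mild standing assumption that at least two nodes carry positive energy, so that $E_j<S$ holds strictly (if $N=1$ the constraint-reduction question is vacuous). No analytic input is needed beyond the two-sided logarithm inequality, so once the signs are organised the proof is short.
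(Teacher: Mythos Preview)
Your argument is correct and, in fact, stronger than what the paper offers. The paper does \emph{not} give an analytic proof of this statement: after rewriting the derivative as in \eqref{new_der}, the authors explicitly say that determining the sign analytically is ``a non-trivial exercise'' and resort to Monte--Carlo simulation, drawing the normalized energy profiles from left- and right-skewed Beta distributions and $\kappa$ uniformly from $(0,1)$, and then plotting the difference ``left part $-$ right part'' over $10^3$ runs (Figure~\ref{fig:derivative_sign}). Hence the statement is labelled a \emph{conjecture} rather than a proposition.

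Your approach is genuinely different and buys an honest proof. The key observation the paper misses is that once the derivative is unpacked into the inequality
\[
\frac{S^{2}(1-p)\,L_2}{c_4-E_j}\;>\;\kappa\,(S-E_j)\,L_1,
\qquad L_1=-\log\Bigl(1-\tfrac{E_j}{c_4}\Bigr),\ L_2=-\log(1-p),
\]
the two logarithms can be estimated in opposite directions by the elementary bounds $x<-\log(1-x)<\tfrac{x}{1-x}$ on $(0,1)$, and the resulting algebraic condition collapses to $E_j\kappa\le E_j$, i.e.\ $\kappa\le 1$, which is a standing hypothesis. Strictness follows from $L_2>p$. The only side condition you need is $E_j<S$, i.e.\ at least two nodes with positive energy, which you flag; this is harmless since with a single node the constraint-reduction question is vacuous. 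In short: your route turns the paper's simulation-supported conjecture into a theorem with a two-line estimate, while the paper's approach gives only empirical evidence but no guarantee.
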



{\bf Validity of Conjecture~\ref{cojec:5}.}
The denominator of the right part of the derivative is negative: $c_4(1-\frac{E_j}{c_4})$ is positive but since $0<(1-\frac{E_j\kappa}{\sum_{i=1}^N E_i})<1$, $\log{(1-\frac{E_j\kappa}{\sum_{i=1}^N E_i})}$ is negative. Thus, the right part of the derivative is negative.

Based on the sign of the left and right part of the derivative, it can be rewritten as
\begin{eqnarray}
&&\underbrace{\frac{1}{c_4(1-\frac{E_j}{c_4})|\log{(1-\frac{E_j\kappa}{\sum_{i=1}^N E_i})}|}}_{\mbox{left part}}\nonumber\\&-&\underbrace{\frac{|\log{(1-\frac{E_j}{C_4})}||(\frac{E_j\kappa}{(\sum_{i=1}^N E_i)^2}-\frac{\kappa}{\sum_{i=1}^N E_i})|}{(1-\frac{E_j\kappa}{\sum_{i=1}^N E_i})\log^2{(1-\frac{E_j\kappa}{\sum_{i=1}^N E_i})}}}_{\mbox{right part}}
.\label{new_der}
\end{eqnarray}
Note that the sign of (\ref{new_der}) is dependent on the values of different variables and their interrelationships. Such as, the term $|\log{(1-\frac{E_j\kappa}{\sum_{i=1}^N E_i})}|$ can be either greater or less than the term $\log^2{(1-\frac{E_j\kappa}{\sum_{i=1}^N E_i})}$ based on the value of $(1-\frac{E_j\kappa}{\sum_{i=1}^N E_i})$. 
It is therefore a non-trivial exercise to determine the sign of the derivative using analytical methods. 
We instead determine the sign of the derivative using simulations.  
\begin{figure}[]
\centering
\subfigure[Right Skewed]{
\includegraphics[width=.45\linewidth]{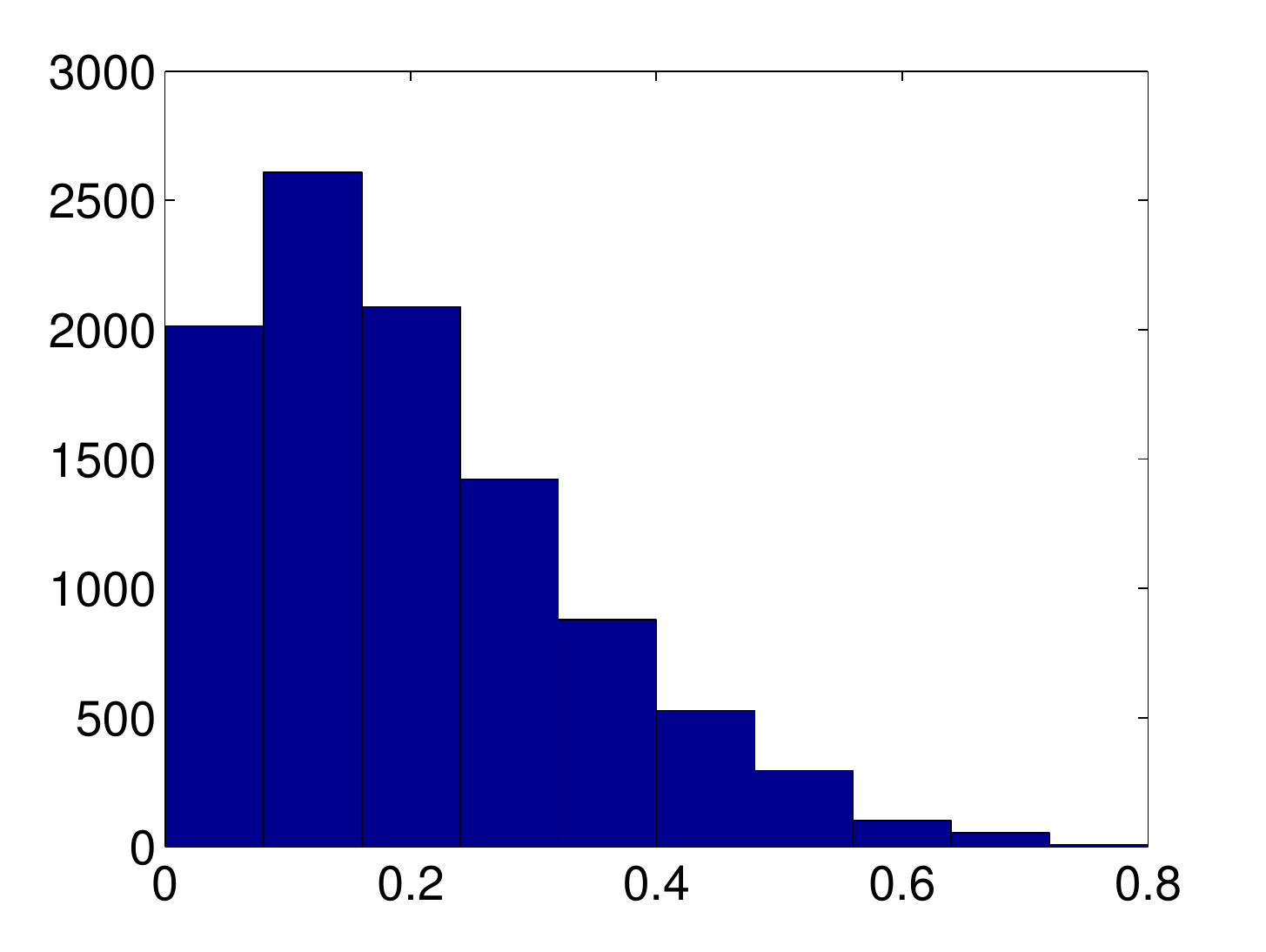}
\label{fig:right_beta}
}
\subfigure[Left Skewed]{
\includegraphics[width=.45\linewidth]{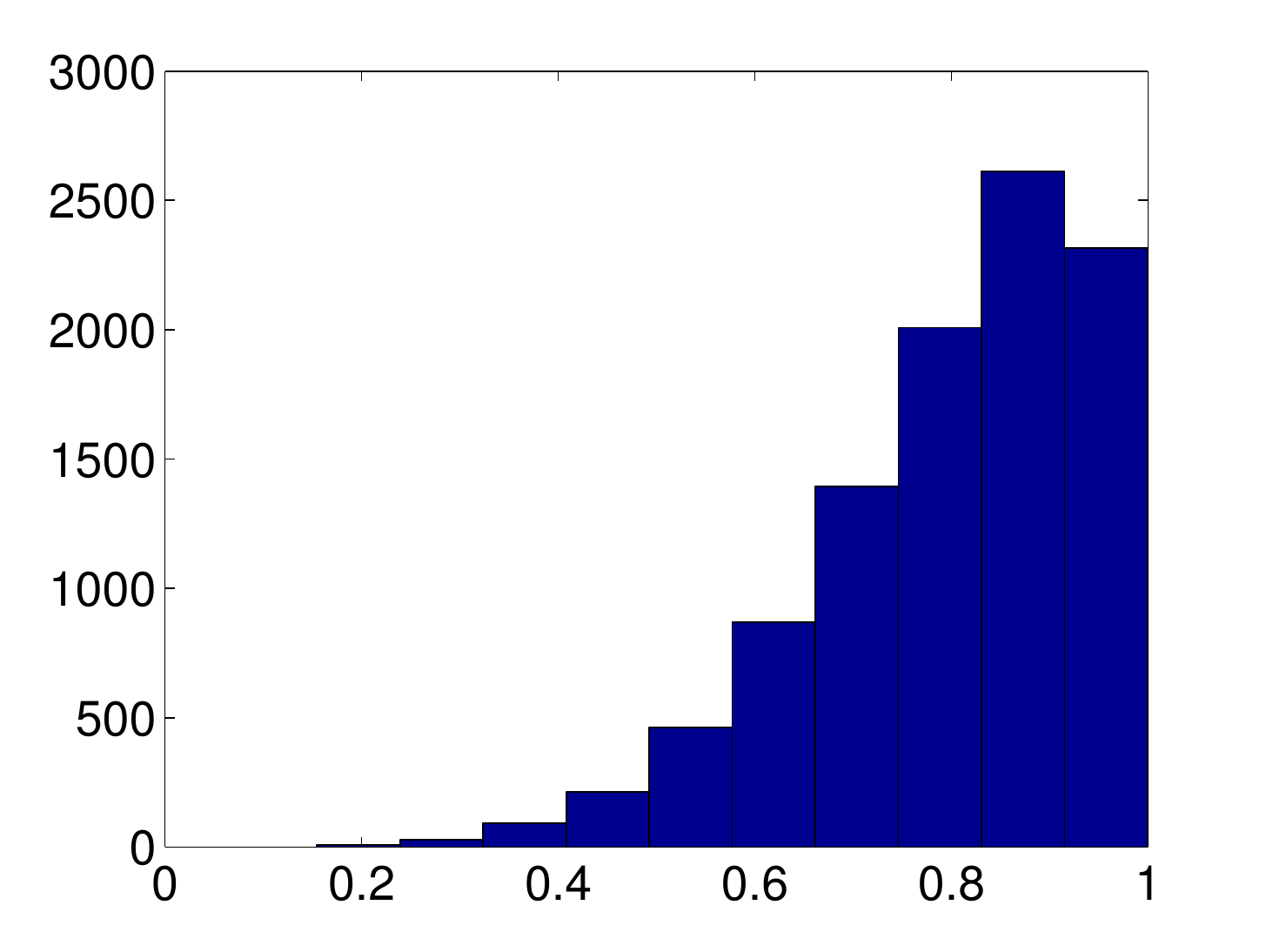}
\label{fig:left_beta}
}
\caption{Histogram of beta distribution. Along x-axis is the ratio $\frac{E_j}{\sum_{i=1}^N E_i}$ and along y-axis is the frequency.}
\label{fig:beta_plots}
\end{figure}

We model energyf profile of the nodes using beta distribution with two positive shape parameters $\alpha$ and $\beta$. Controlling the values of $\alpha$ and $\beta$ we can control the shape of the probability density function (pdf) of beta distribution. Such as, when $\beta >> \alpha$ the pdf is strongly right skewed and it is strongly left skewed if the values of $\alpha$ and $\beta$ were switched. We show the histograms of beta distribution for right and left skewed orientations in Figure~\ref{fig:right_beta} and \ref{fig:left_beta}, respectively. Using left skewed orientation of the beta distribution, a network can be modeled to contain energy-rich nodes with high probability and energy constrained nodes with small probability. On the other hand, using right skewed orientation of the beta distribution a network can be modeled to contain energy constrained nodes with high probability and energy-rich nodes with small probability.
\begin{figure}[h]
\centering
{
\includegraphics[width=.7\linewidth]{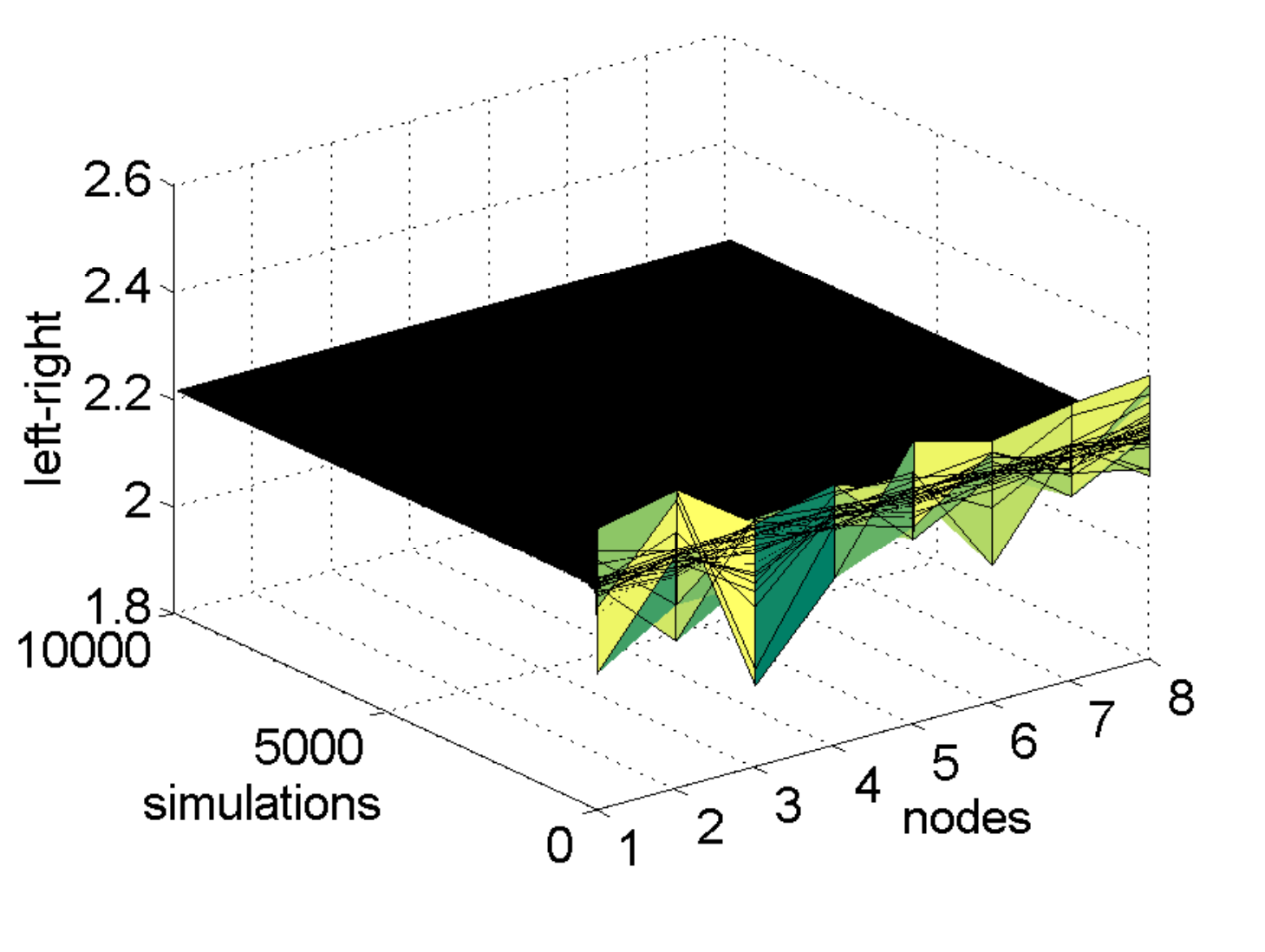}
}
\caption{Sign of the derivative.}
\label{fig:derivative_sign}
\end{figure}
Modeling $\frac{E_j (1\leq j\leq N)}{\sum_{i=1}^N E_i}$, separately using left and right skewed beta distribution, and drawing $\kappa$ uniform randomly from $(0,1)$, over $10^3$ simulations we observe that (see Figure~\ref{fig:derivative_sign}) the left part of (\ref{new_der}) is always greater than the right part (left - right along $Y$-axis is always positive). Therefore, we conjecture that the sign of the derivative is positive.

Since the sign of the derivative is positive, the right-hand-side of (\ref{con_l}) is an non-decreasing function of $E_j$. Therefore, constraint with $E_j=E_1$ is the only active constraint. We rewrite (\ref{con_l})
\begin{eqnarray}
\ell\leq\frac{\log{(1-\frac{E_{1}}{c_4 })}}{\log{(1-\frac{E_{1}\kappa}{\sum_{i=1}^NE_i})}}.\nonumber
\end{eqnarray}

Recall that the number of non-zero elements per row of the projection matrix is proportional to the sampling parameter $\kappa$. For a given energy budget, if we increase $\ell$, we have to decrease $\kappa$, otherwise due to higher sampling probability, some nodes may deplete their energy. However, if $\kappa$ is too small, there would be lot of rows in the projection matrix which will be zero. A row with all zeros contains no information of the signal and thus is useless for the signal reconstruction process. We therefore impose an additional constrain that the expected number of non-zero elements in each row to be at least one by using the constraint $g_j \geq \frac{1}{N}$.
Recall that $g_j=\frac{\kappa E_j}{\sum_{i=1}^N E_i}$, therefore,
\begin{eqnarray}
\kappa\geq \mbox{ }\frac{\sum_{i=1}^N E_i}{N E_j},\mbox{ }\forall_j. \label{kappa_expession}
\end{eqnarray}
Note that (\ref{kappa_expession}) is a collection of $N$ constraints, but it can readily be shown that these $N$ constraints can be replaced by the following single constraint:
\begin{eqnarray} 
\kappa\geq\frac{\sum_{i=1}^N E_i}{N E_{1}}. \label{new_kappa}
\end{eqnarray}

With Conjecture \ref{cojec:5} and the new constraint (\ref{new_kappa}), we re-write the optimization problem as:
\begin{eqnarray}
\argmin_{\ell,\kappa}
\frac{c_1\log{N}}{\ell}(2+c_2\frac{\sum_{i=1}^N E_i}{E_{\min}\kappa})\label{opt_prob}
\end{eqnarray}
Subject to:
\begin{eqnarray}
\ell-\frac{\log{(1-\frac{E_{1}}{c_4 })}}{\log{(1-\frac{E_{1}\kappa}{\sum_{i=1}^NE_i})}}\leq0\nonumber\\
\kappa-\frac{\sum_{i=1}^N E_i}{N E_{1}}\geq0.\nonumber
\end{eqnarray}
Solution to \eqref{opt_prob} provides the {\bf optimal} value of $\ell$ and $\kappa$.

\section{Distributed Algorithm}
\label{distributed}
We design a distributed algorithm for \EP where nodes {\bf locally} generate projections without communicating with base and thus save the additional energy required by the centralized approach for base to node communication. Our description so far has assumed $M = 1$, however \EP can be readily extended to the case with $M > 1$. In this case, we consider the sensor measurement $u_{hj}$ collected at time $t_h$ ($h = 1,...,M$)
by sensor $n_j$ ($1\leq j \leq N$). We will also \emph{vectorize} the 2-dimensional signal $u_{hj}$. We will abuse the notation and use $u$ to denote this vector (this should be clear from the context). The vector $u$ has $\hat{N} = M N$ elements where the $q$-th element of $u$ is $u_{hj}$ where $q = h + (j-1)M$. The corresponding projection matrix $\Phi$ is now an $\ell \times \hat{N}$ matrix. For $q = h + (j-1)M$, the elements in the $q$-th column of the projection matrix ($\Phi_{iq}$ with $i = 1,...,\ell$) are generated by Equation~(\ref{eqn:sparse_approx}) with parameter $g_j$ and these elements will determine whether the sensor $n_j$ will sample at time $t_h$. We will now describe an algorithm which is used by \EP to recover an approximation of the signal ($u$), from the sparse projections created locally in different nodes.

\begin{itemize}
\item First each node $n_{\tilde{j}}$ ($1 \leq \tilde{j} \leq N$)   generates the random numbers $\Phi_{r1}, ..., \Phi_{r\hat{N}}$ using the distribution function mentioned in Equation (\ref{eqn:sparse_approx}).  Each of $n_{\tilde{j}}$ is responsible for generating the $r$-th row ($1 \leq r \leq \ell$) of the projection matrix. Consider the element $\Phi_{r q}$ in the projection matrix and let us assume that the column index $q$ and the node-time pair $(j,h)$ have one-to-one correspondence given by $q = (j - 1)M + h$. 
\item If $\Phi_{r q}\ne 0$, node $n_{\tilde{j}}$ asigns node $n_j$ to sample at time $t_h$ and node $n_j$ sends the sample to node $n_{\tilde{j}}$.
\item Upon receiving $u_{jh}$ from node $n_j$, $n_{\tilde{j}}$ computes $u_r=\Sigma_{q=1}^{\hat{N}}\Phi_{rq} u_{q}$ (where $u_q = u_{jh}$). Node $n_{\tilde{j}}$ performs this operation for all the values it receives and finally transmits $u_r$ to the base station. This process is repeated for all node $n_{\tilde{j}},1\leq \tilde{j} \leq N.$
\item After receiving transmissions from the nodes, base station has $\Phi_{\ell \times \hat{N}}u=[x_1,...,x_\ell]^T$. It then generates $\Phi_{\ell \times \hat{N}}$ using the same seed as the nodes. Finally, with $x(=\Phi_{\ell \times \hat{N}}u)$, $\Phi_{\ell \times \hat{N}}$ and $\Psi$, base station uses low-complexity sketching decoder to recover the signal. The complexity of the decoder is $O(\ell\hat{N}\log{\hat{N}})$.
\end{itemize}
\begin{figure}[]
\centering
\includegraphics[width=\linewidth]{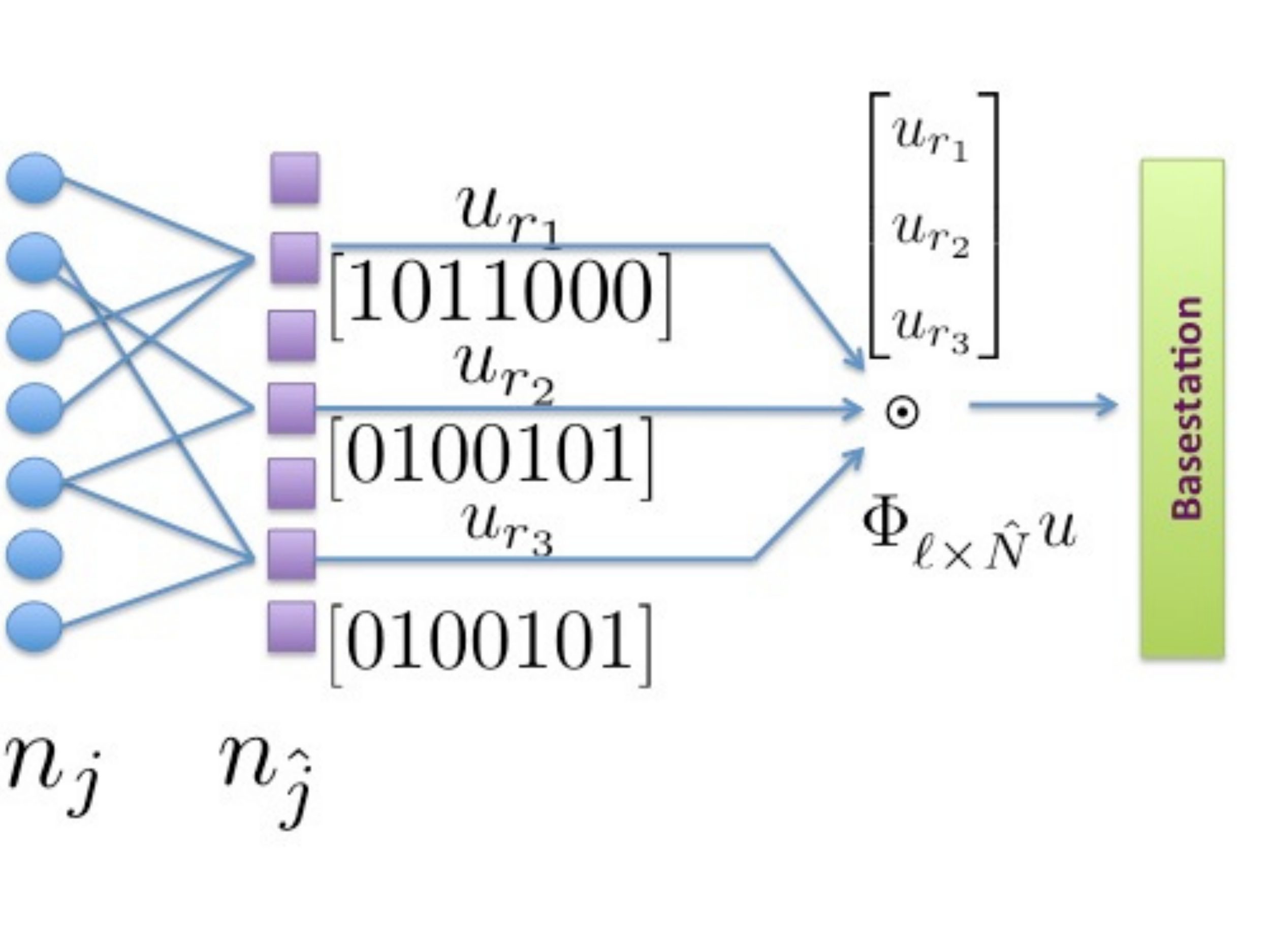}
\caption{Illustration of the Distributed Algorithm for one time snapshot.}
\label{fig:evaluation}
\end{figure}

\section{Experimental Results}
\label{sec:evaluation}
In this Section we  use the data collected from energy hungry wind speed and wind direction sensors at the Springbrook deployment to evaluate the performance of \EP. \subsection{Experimental Setup}
Let $\hat{u}$ be the approximation of the signal $u$, we use relative error, $||u-\hat{u}||_2^2/||u||_2^2$ to determine the accuracy of the approximation. The relative error is a commonly used error metric in the signal processing literature~\cite{bcs,distributed} that tells us how close the approximate signal is to the real signal.

We used data from $8$ of the sensing nodes from Springbrook National Park deployment.  Amongst these $8$ nodes (shown in Figure~\ref{fig:evaluation1}), node $5$ is deep in the forest whereas the rest of the nodes are in the open space. Consequently, the solar current harvest rate of node $5$ is the lowest whereas the rest of the nodes have higher (also similar) harvest rates. Inter-sampling interval in the deployment is $5$ minute. We collected one month of data which gave us $8640$ snapshots of both wind speed and wind direction sensor data. Note that sketching decoder computes the estimation from median, therefore, it performs better approximation with large $\hat{N}$. We used $\hat{N}=M \times N=2048$, by segmenting our snapshots from $N=8$ nodes into group of $M=256$ each. 
Below we define a number of variants of \EP that we will use in the results.
 \begin{figure}[]
\centering
\includegraphics[width=0.44\linewidth]{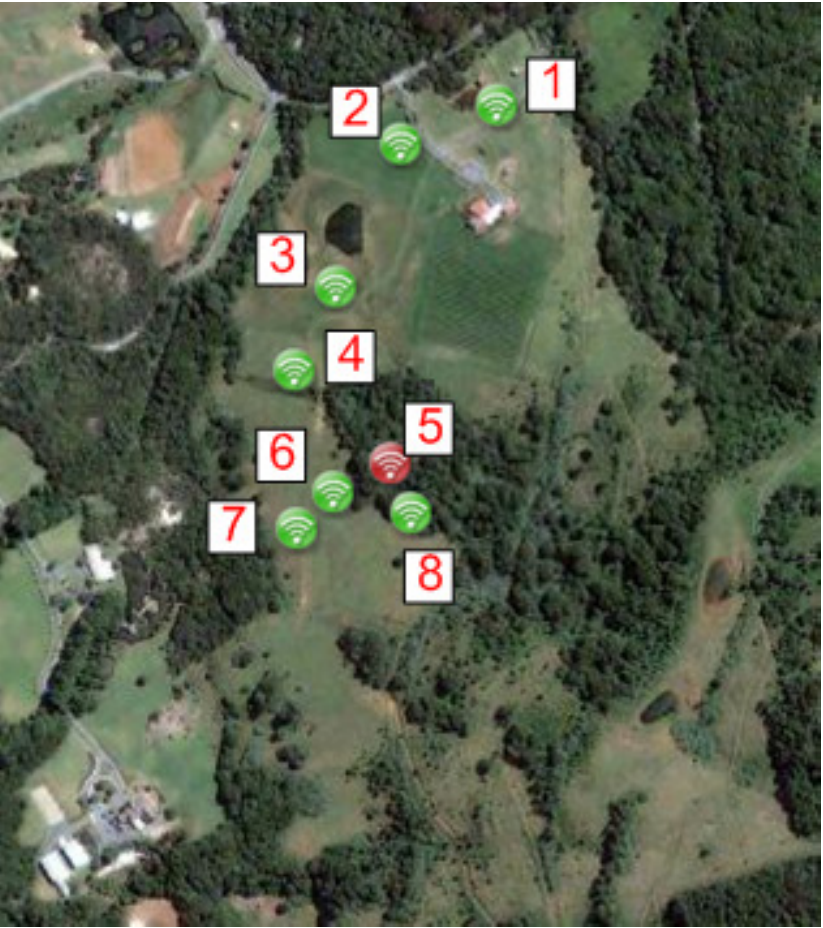}
\caption{ Location of the Springbrook sensing nodes.}
\label{fig:evaluation1}
\end{figure}

{\bf \EP-Upperbound:} Given the energy profile of the nodes, we solve the optimization problem in~(\ref{opt_prob}) to determine this minimum value of the upper bound of the approximation error.  We call this error \EP-Upperbound. We use the Matlab optimization solver ``fmincon'' to solve~(\ref{opt_prob}). 

{\bf \EP-Empirical:} The difference between this and \EP-Upperbound is that, unlike \EP-Upperbound, the error is not given by the solution of~\eqref{opt_prob}. The error is calculated from the reconstruction of the real data, such as wind speed and wind direction data. However, for various data length  $\hat{N}$, we use the optimal $\ell$ and $\kappa$ given by \EP-Upperbound.

{\bf EAST-Equality:} \EP uses inequality constraints 
$\forall_j: \mbox{ }(1-(1-g_j))^{\frac{1}{\ell}}c_4\leq E_j$ 
and we conjecture that the optimal solution given by 
\EP requires only one constraint to be active.
Precisely, we conjecture that the optimal solution requires only the node with minimum energy to operate at ``exact'' energy neutral operation.
Let us now envision another solution 
where all the constraints are active, i.e.
\begin{eqnarray}
\forall_j:\mbox{ }(1-(1-g_j)^{\ell})c_4= E_{j}.\label{east_equality}
\end{eqnarray}
We will refer to this variant as EAST-Equality. 

\subsection{Results}
\begin{figure}[]
\centering
\subfigure[Wind Speed]{
\includegraphics[width = 0.45\linewidth]{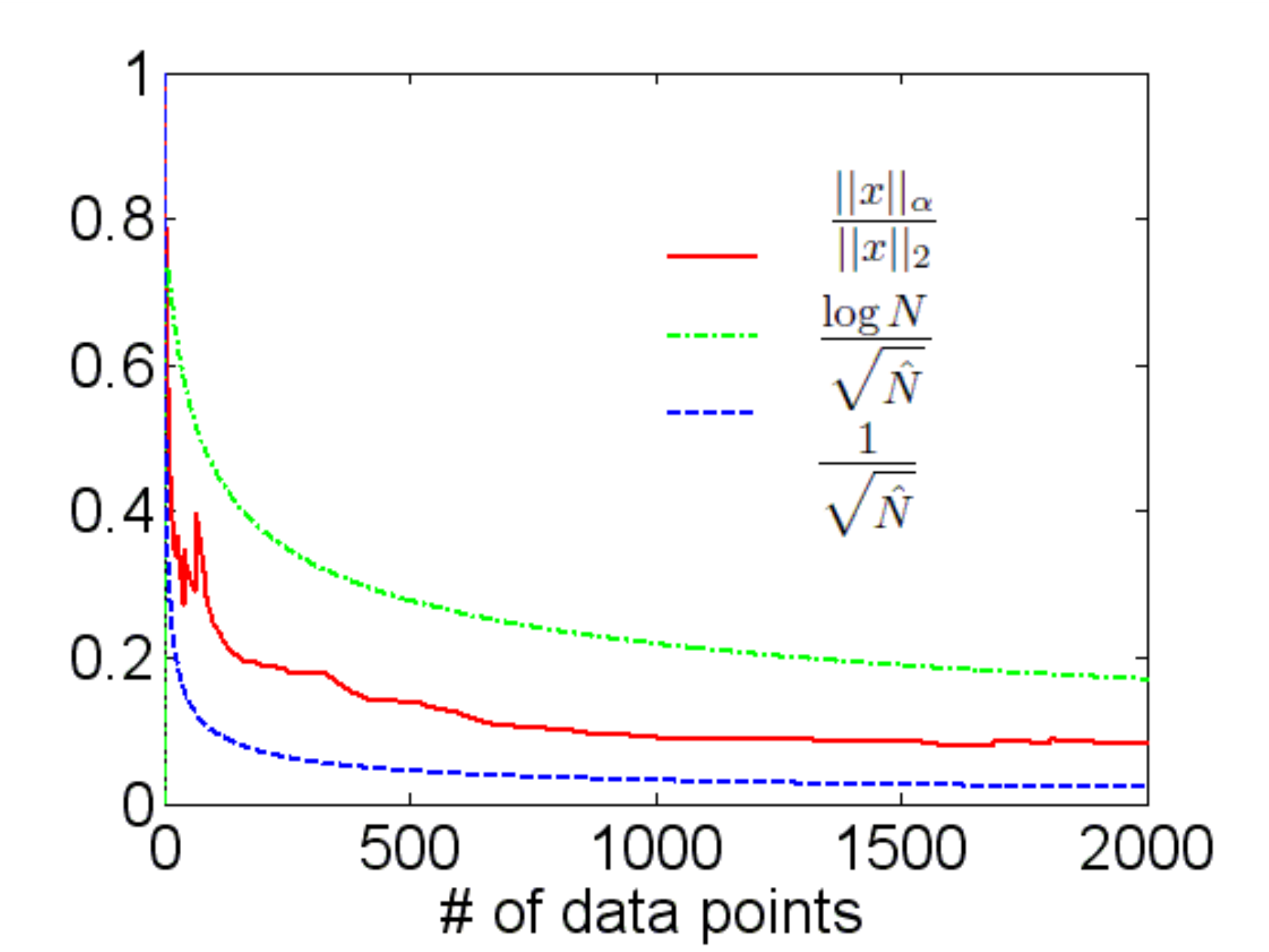}
\label{fig:wind_speed_basis}
}
\subfigure[Wind Direction]{
\includegraphics[width = 0.45\linewidth]{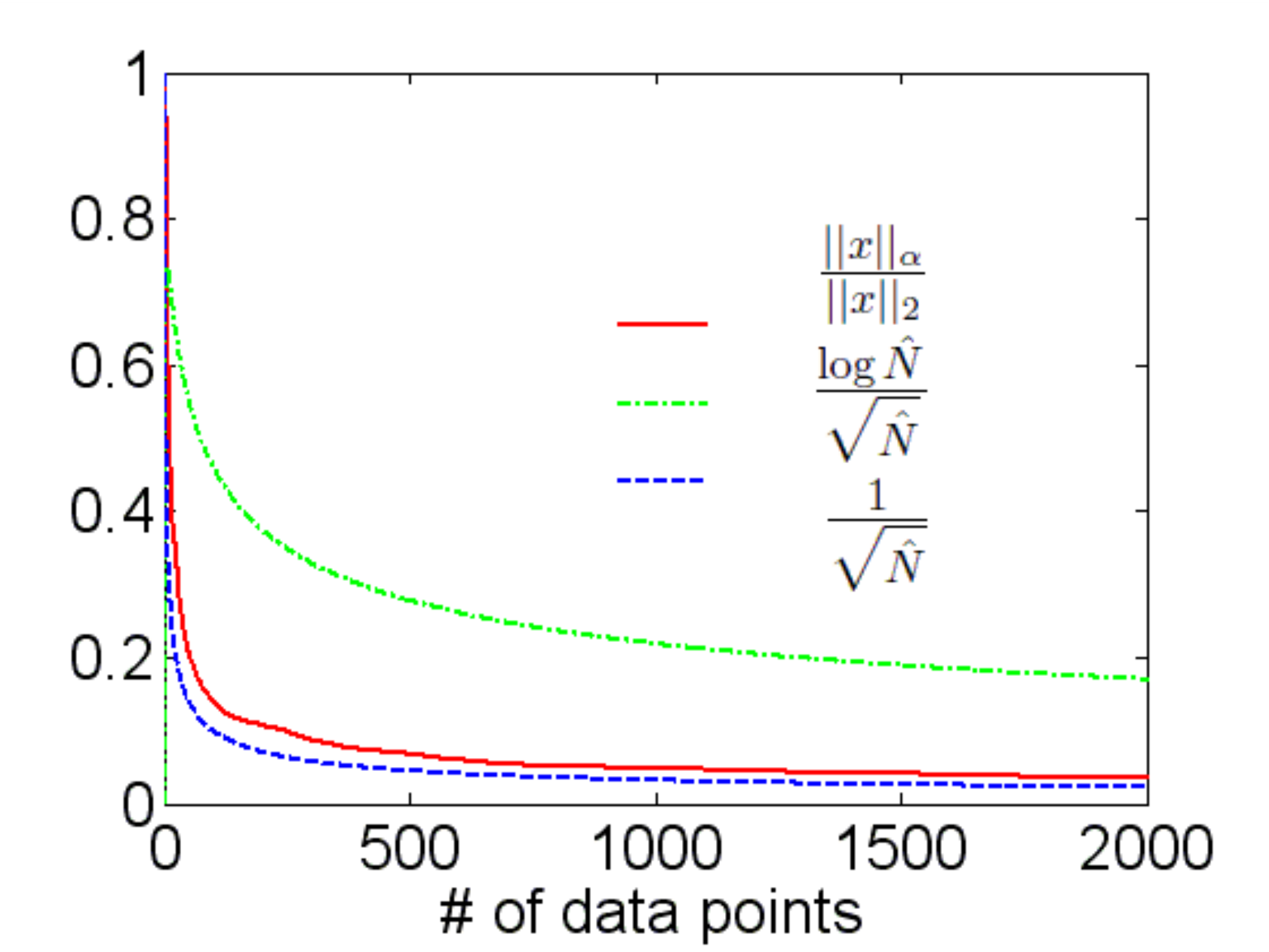}
\label{fig:wind_direction_basis}
}
\caption{Peak-to-total energy condition on data.}
\label{fig:basis}
\end{figure}
In the results section we first verify whether the wind data satisfy the bounded peak-to-total energy condition. In Figure~\ref{fig:basis} we plot the peak to total energy ratio  $\frac{||u||_{\infty}}{||u||_2}$ for various $\hat{N}$. We observe that for both of the wind sensor data, $\frac{||u||_{\infty}}{||u||_2}$ is bounded by $\frac{\log{\hat{N}}}{\sqrt{\hat{N}}}$ and $\frac{1}{\sqrt{\hat{N}}}$. This satisfies the peak-to-total energy condition. 


\begin{figure*}[tp]
\centering
\subfigure[Wind Speed]{
\includegraphics[width=.4\linewidth]{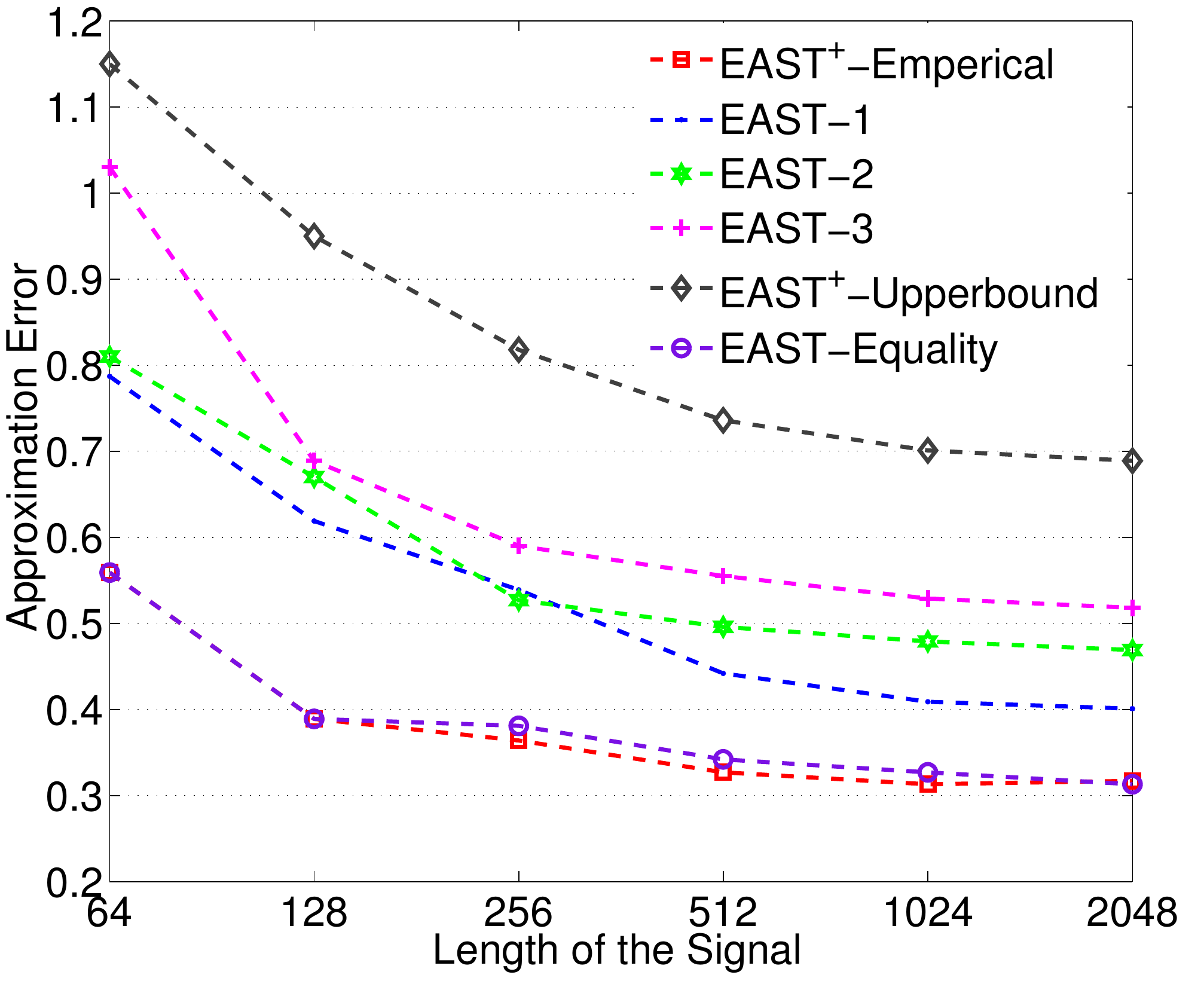}
\label{fig:wind_speed_optimization}
}
\subfigure[Wind Direction]{
\includegraphics[width=.4\linewidth]{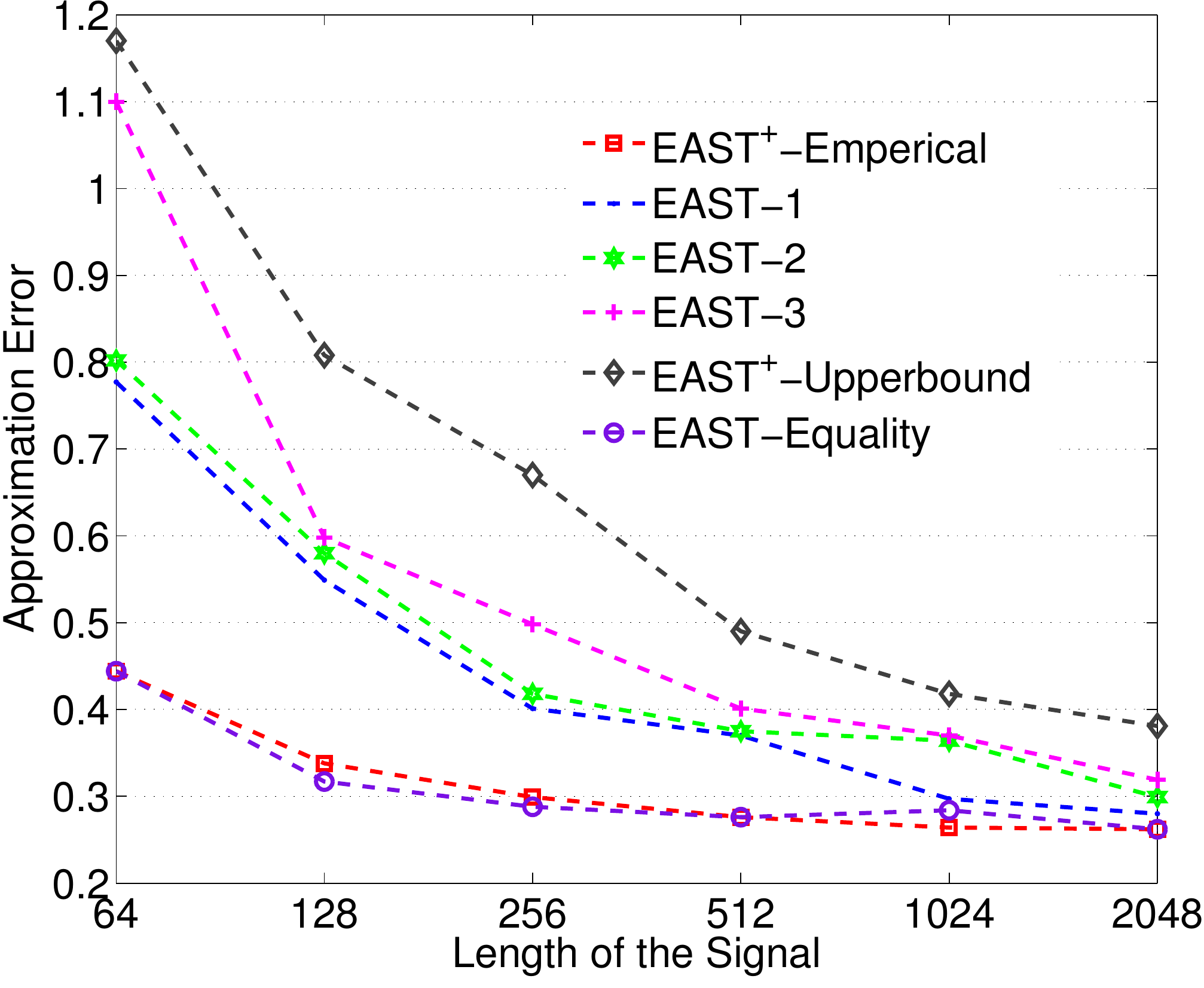}
\label{fig:wind_dir_optimization}
}
\caption{Evaluation of \EP.}
\label{fig:theory_vs_empirical}
\end{figure*}
%
%
We report the performance of \EP in Figure~\ref{fig:theory_vs_empirical}. In general, the approximation error decreases as the length of the signal ($\hat{N}$) increases. This can be explained using the behavior of sketching decoder, which performs better when the length of the signal is large. Clearly, \EP-Upperbound provides the upper bound of the reconstruction error for all the values of $\hat{N}$. It provides the upper bound for EAST, \EP-Empirical and EAST-Equality. 

We also compare the performance of \EP with our previously proposed EAST. In particular, we compare \EP-Empirical with EAST. We choose sufficiently large values of $\ell$ and $\kappa$ for EAST, which  ensure energy neutral operations. 
In Table~\ref{tab:ell_kappa_wind_dir} we report three sets of $\ell$ and $\kappa$ used for wind direction. Similar values of $\ell$ and $\kappa$ were chosen for wind speed. The three reconstructions using these three set of values are referred to as EAST-1, EAST-2 and EAST-3 in Figure~\ref{fig:theory_vs_empirical}. We find that \EP-Empirical achieves significant error reduction compared to EAST-1, EAST-2 and EAST-3. The error reduction is highest when signal length is small. For example, for wind speed, when signal length is 64, the error reduction is approximately 17\% compared to the EAST-1 (We compare with EAST-1, since it performs better than EAST-2 and EAST-3). Similarly, for wind direction, the error reduction is approximately 30\%. However, the error reduction diminishes as the length of the signal increases. For wind speed the error reduction is 8\% when signal length is 2048. Note that for in-situ signal processing on resource improvised sensor nodes, small signal length is desirable. Therefore, \EP is highly preferable over EAST for wireless sensor networks.  

Finally, we compare \EP-Emperical with EAST-Equality. The comparison results show that these two have similar reconstruction performances. In East-Equality all the nodes operate at exact energy neutral condition. Therefore, East-Equality would require higher sampling energy compared to \EP. However, despite sampling at a higher rate,  East-Equality  does not offer any significant reconstruction improvement compared to \EP-Empirical. 
This further substantiates our claim that \EP is optimal.

\begin{table*}[t]
\centering
\caption{Choice of $\ell$ and $\kappa$ used in Figure~\ref{fig:theory_vs_empirical} for wind direction data. E1, E2, E3 and $\mbox{E}^+$ refers to EAST-1, EAST-2, EAST-3 and \EP respectively.}
\resizebox{!}{0.8cm}{
\setlength{\tabcolsep}{0.8pt}
\begin{tabular}{|c|r|r|r|r|r|r|r|r|r|r|r|r|c|r|r|r|r|r|r|r|r|r|r|r|r|} \hline
&\multicolumn{4}{c|}{$\hat{N}$=64}&\multicolumn{4}{c|}{$\hat{N}$=128}&\multicolumn{4}{c|}{$\hat{N}$=256}&\multicolumn{4}{c|}{$\hat{N}$=512}&\multicolumn{4}{c|}{$\hat{N}$=1024}&\multicolumn{4}{c|}{$\hat{N}$=2048}\\\hline
&E1  & E2  & E3   &  $\mbox{E}^+$&E1  & E2  & E3   &  $\mbox{E}^+$&E1  & E2  & E3   &  $\mbox{E}^+$&E1  & E2  & E3   &  $\mbox{E}^+$&E1  & E2  & E3   &  $\mbox{E}^+$&E1  & E2  & E3   &  $\mbox{E}^+$ \\\hline
$\ell$&413  &290   &370   &368  &850 & 660 &720   &740 &1500 & 1290 & 1460  &1482&3295  & 2675  & 2950   &2968  &6370  & 5150  & 5910   & 5940 &12200 & 11190 & 11780   &  11883  \\\hline
$\kappa$&0.06  &0.1   &0.08   &0.0945  &0.01 &0.05  &0.02   &0.0473 &0.009 &0.03  & 0.0239  & 0.0236 &0.009  &0.02   &0.008    &0.0118  &0.0023  &0.0071   & 0.0047   & 0.0059 &0.0013 &0.0047  &0.0021    & 0.0030  \\\hline
\end{tabular}
}
\label{tab:ell_kappa_wind_dir}
\end{table*}

\section{Related Work}
\label{sec:related_work}

Energy conservation in wireless sensor networks is a heavily studied field. Below we rationalize the uniqueness of \EP with respect to the related literature. 
In~\cite{cesare} an adaptive sampling algorithm is presented which can be used for estimating the best sampling frequency for energy hungry sensors. However, their approach assumes that the sensors have uniform energy profiles.

Work presented in \cite{kansal_harvest} proposes a harvest-aware adaptive sampling approach to dynamically identify the maximum duty cycle. However, their focus is not on signal approximation from the network.


In~\cite{reboul}, a Bayesian estimation technique is presented to estimate the wind speed and wind direction signals. The authors have supplemented their estimation using the assumption that the wind speed and wind direction signals have a correlation with hourly tide data. However, in our work we assume that signals are compressible due to the presence of spatial-temporal correlation among the data collected at different sensing nodes.

A number of studies~\cite{Spatiotemporal,backstack,gupta} have utilized the spatial-temporal correlation of the signal to reduce sampling requirements. Though our approach has similar assumption, we have considered non-uniform energy profile of the sensors, which is different. Moreover, we have used Sparse Random Projections, which is also different from these approaches.

A large number of signal approximation techniques use Compressive Sensing~\cite{wakin,rana2011adaptive} to conserve transmission energy assuming that radio is the dominant component of energy consumption, however we assume energy-hungry sensor dominates the energy consumption.
A slightly different compressive sensing based data gathering approach is presented in~\cite{interplay} which investigates the impact of a routing topology generated sparse projection matrix on the accuracy of the approximation. Our work is different from theirs since our projection matrix is not based on the routing topology rather it is populated based on the energy profile of the sensors.
A more general sparse projection matrix proposed in the compressive sensing literature can be found in~\cite{Baraniuk:2008p10398}.
However, the non-zero elements of this sparse projection matrix are chosen uniform randomly, but, in order to enable energy-aware sampling, the non-zero elements in the projection matrix need to be chosen nonuniform randomly.

Recently, we have extended the theory of compressive sensing showing that it can be used to support non-uniform sampling~\cite{6146546,shen2013nonuniform}. However, in this paper we choose sparse random projections over compressive sensing, since the decoding process of compressive sensing is computationally expensive. The complexity of decoding a $n$ data point vector is $O(n^3)$. Whereas, decoding complexity of sparse random projections is as low as $O(mn\log{n})$, where $m$ is the number of projections. In this paper the projects are generated locally without any coordination between basestation and nodes, and, the final signal recovery takes place at the resource enriched basestation. In our future study we seek to conduct the signal recovery at the resource limited sensor nodes. For that purpose a low complexity decoder will be very useful.

\section{Conclusion}
\label{sec:conclusion}
In this paper we propose \EP, which offers an optimal sensing strategy for any rechargeable wireless sensor networks with energy hungry sensors. We derive the upper bound of reconstruction error for \EP. Then using two energy hungry sensor (wind speed and wind direction sensor) data from Springbrook National park sensor deployment, we evaluate the validity of the upper bound of the reconstruction error. We compare \EP with our previously proposed signal reconstruction method EAST. The shortcoming of EAST is that it does not guarantee the optimal utilization of node energy, which could potentially compromise reconstruction accuracy. The comparison results presented in this paper clearly show that \EP can significantly reduce the reconstruction error. We also present a distributed algorithm for \EP, which can be readily used in the sensor network deployments. In our future study we want to study the adaptation of dynamic change in energy profile. 
%
%

\begin{spacing}{0.8}
\bibliographystyle{IEEEtran}
\bibliography{reference_2,sigProc}
\end{spacing}

\begin{center}
\newpage
{\bf APPENDIX}
\end{center}
\begin{proof}[Proof of Proposition~\ref{prop:1}] It can be proved that the projection matrix defined by Equation~(\ref{eqn:sparse_approx}) satisfies these conditions:
\begin{eqnarray}
\mathbb{E}\left[\Phi_{ij}\right]=0 \mbox{,}\hspace{.05cm} \mathbb{E}\left[\Phi_{ij}^2\right]=1 \mbox{,}\hspace{.05cm} \mathbb{E}\left[\Phi_{ij}^4\right]=\frac{1}{g_j}.\label{eqn:cond_expection}
\end{eqnarray}
Define independent random variables $w_1,..w_{\ell}$ where, $w_i = \left( \sum_{j=1}^{N} u_j \Phi_{ij} \right) \left( \sum_{j=1}^{N} v_j \Phi_{ij} \right) $
Expectation and second moment of $w_i$ can be computed as,
\begin{eqnarray}
\mathbb{E}[w_i] &=& \mathbb{E}[\sum_{j=1}^{N}u_j v_j\Phi^2_{ij}+\sum_{\ell \neq m} u_{\ell} v_m\Phi_{il}\Phi_{im}]\nonumber \\
&=&\sum_{j=1}^{N}u_j v_j \mathbb{E}[\Phi_{ij}^2]+\sum_{l\neq m}u_{\ell} v_m\mathbb{E}[\Phi_{il}]\mathbb{E}[\Phi_{im}]\nonumber \\
&=& u^Tv. \nonumber
\end{eqnarray}
\begin{eqnarray}
\mathbb{E}[w_i^2] &=& \mathbb{E}[(\sum_{j=1}^{N}u_j v_j\Phi^2_{ij})^2+(\sum_{\ell \neq m} u_{\ell} v_m\Phi_{il}\Phi_{im})^2  \nonumber \\&+&2(\sum_{j=1}^{N}u_j v_j\Phi^2_{ij})(\sum_{\ell \neq m} u_{\ell} v_m\Phi_{il}\Phi_{im})]\nonumber \\
&=&\sum_{j=1}^{N}u_j^2 v_j^2 \mathbb{E}[\Phi_{ij}^4]\nonumber\\&+&2\sum_{\ell < m}u_{\ell}v_{\ell}u_mv_m \mathbb{E}[\Phi_{il}^2]\mathbb{E}[\Phi_{im}^2]\nonumber\\&+&\sum_{\ell \neq m}u_{\ell}^2v_m^2\mathbb{E}[\Phi_{il}^2]\mathbb{E}[\Phi_{im}^2] \nonumber \\&+&2\sum_{\ell < m}u_{\ell}v_{\ell}u_mv_m \mathbb{E}[\Phi_{il}^2]\mathbb{E}[\Phi_{im}^2]\nonumber \\
&=&\sum_{j=1}^{N}\frac{1}{g_j} u_j^2 v_j^2 +2\sum_{\ell \neq m}u_{\ell}v_{\ell}u_mv_m + \sum_{\ell \neq m}u_{\ell}^2 v_m^2 \nonumber\\
&=&2(\sum_{j=1}^{N} u_j^2 v_j^2 +\sum_{\ell \neq m} u_{\ell} v_{\ell} u_m v_m) \nonumber\\&+&(\sum_{j=1}^{N}u_j^2 v_j^2+\sum_{l\neq m} u_{\ell}^2 v_m^2) \nonumber \\&&+\sum_{j=1}^{N}\frac{1}{g_j}u_j^2 v_j^2-3\sum_{j=1}^{N} u_j^2 v_j^2 \nonumber\\
&=& 2(u^Tv)^2+||u||_{2}^2||v||_{2}^{2}+\sum_{j=1}^{N}\frac{1}{g_j}u_j^2 v_j^2\nonumber\\&-&3\sum_{j=1}^{N} u_j^2 v_j^2. \nonumber \end{eqnarray}
Since $x^Ty = \frac{1}{\ell} \sum_{i = 1}^{\ell} w_i$, using the above result we can show that:
\begin{eqnarray}
Var(x^Ty)&=& \frac{1}{\ell}((u^Tv)^2+||u||_{2}^2||v||_{2}^{2}+\sum_{j=1}^{N}\frac{1}{g_j}u_j^2 v_j^2\nonumber\\&-&3\sum_{j=1}^{N} u_j^2 v_j^2)\nonumber.
\end{eqnarray}
\end{proof}

In order to prove proposition~\ref{prop:3}, we need the following lemma.
\begin{lem}
\label{prop:2}
Consider a data vector $u \in \mathbb{R}^{N}$ which satisfies condition~(\ref{eqn:cond_sparse}). Let $v\in \mathbb{R}^{N \times N}$. Consider a sparse random matrix $\Phi \in \mathbb{R}^{\ell \times N}$ satisfies condition~(\ref{eqn:cond_expection}), with sparsity parameter $\rho=g_j$. Define $\ell= 48\frac{(2+\mu^2\max_j\frac{1}{g_j})k^2{(1+\gamma)\log{(N)}}}{c^2\epsilon^2}$. The random projections $\frac{1}{\sqrt{\ell}}\Phi u$ and $\frac{1}{\sqrt{\ell}}\Phi v_i$ then produces an estimation $\hat{a_i}$ for $u^T v_i$, with probability at least $1-N^{-\gamma}$, satisfying $|\hat{a_i}-u^Tv_i|\leq \epsilon||u||_2||v_i||_2$, $\forall_{1\leq i\leq N}$
\end{lem}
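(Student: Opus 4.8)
\textbf{Proof proposal for Lemma~\ref{prop:2}.}

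The plan is to combine Proposition~\ref{prop:1} with a Chebyshev-type concentration bound and a union bound over the $N$ vectors $v_i$. First I would apply Proposition~\ref{prop:1} with the pair $(u, v_i)$: since $\mathbb{E}[x^T y_i] = u^T v_i$, the inner-product estimator $\hat a_i := x^T y_i$ (where $x = \frac{1}{\sqrt\ell}\Phi u$ and $y_i = \frac{1}{\sqrt\ell}\Phi v_i$) is unbiased, and its variance is
\begin{eqnarray}
\mathrm{Var}(\hat a_i) = \frac{1}{\ell}\Big((u^Tv_i)^2 + \|u\|_2^2\|v_i\|_2^2 + \sum_{j=1}^N \tfrac{1}{g_j} u_j^2 (v_i)_j^2 - 3\sum_{j=1}^N u_j^2 (v_i)_j^2\Big).\nonumber
\end{eqnarray}
The next step is to bound this variance by a clean multiple of $\|u\|_2^2\|v_i\|_2^2$. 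By Cauchy--Schwarz, $(u^Tv_i)^2 \le \|u\|_2^2\|v_i\|_2^2$; the $-3\sum u_j^2(v_i)_j^2$ term is negative and may be dropped; and for the cross term I would use the peak-to-total energy condition~(\ref{eqn:cond_compress}) on $u$ together with $\max_j \tfrac1{g_j}$: write $\sum_j \tfrac1{g_j} u_j^2 (v_i)_j^2 \le (\max_j \tfrac1{g_j}) \|u\|_\infty^2 \|v_i\|_2^2 \le (\max_j \tfrac1{g_j}) \mu^2 \|u\|_2^2 \|v_i\|_2^2$, using $\|u\|_\infty \le \mu\|u\|_2$. This gives $\mathrm{Var}(\hat a_i) \le \frac{1}{\ell}\big(2 + \mu^2\max_j\tfrac1{g_j}\big)\|u\|_2^2\|v_i\|_2^2$.

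Then I would apply Chebyshev's inequality to each $\hat a_i$: for the target deviation $t = \epsilon \|u\|_2\|v_i\|_2$,
\begin{eqnarray}
\Pr\big[|\hat a_i - u^Tv_i| > \epsilon\|u\|_2\|v_i\|_2\big] \le \frac{\mathrm{Var}(\hat a_i)}{\epsilon^2\|u\|_2^2\|v_i\|_2^2} \le \frac{2 + \mu^2\max_j\tfrac1{g_j}}{\ell\,\epsilon^2}.\nonumber
\end{eqnarray}
Substituting the prescribed $\ell = 48\frac{(2+\mu^2\max_j\frac1{g_j})k^2(1+\gamma)\log N}{c^2\epsilon^2}$ makes the right-hand side equal to $\frac{c^2}{48 k^2 (1+\gamma)\log N}$, which (since $c, k \ge 1$ and for $N$ large enough) is at most $\frac{1}{N^{1+\gamma}} \cdot (\text{something} \le 1)$ — more carefully, it is bounded by $N^{-(1+\gamma)}$ for the relevant parameter regime, or one absorbs constants so that each failure probability is $\le N^{-(1+\gamma)}$. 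A union bound over the $i = 1,\dots,N$ events then yields total failure probability at most $N \cdot N^{-(1+\gamma)} = N^{-\gamma}$, so with probability at least $1 - N^{-\gamma}$ the bound $|\hat a_i - u^Tv_i| \le \epsilon\|u\|_2\|v_i\|_2$ holds simultaneously for all $i$, as claimed.

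The main obstacle I anticipate is the bookkeeping in the last step: matching the constant $48$ and the factor $k^2(1+\gamma)\log N$ in the definition of $\ell$ to make the per-event failure probability exactly $N^{-(1+\gamma)}$, so that the union bound closes at $N^{-\gamma}$. This requires being careful about what $c$ is (it appears to be an absolute constant tied to the sketching-decoder guarantee), about the $k^2$ factor (which presumably enters because the lemma is a stepping stone toward estimating $k$ transform coefficients, not just one inner product), and about the regime of $N$ in which $\log N / N^{1+\gamma}$-type terms behave monotonically. A secondary subtlety is verifying the stated fourth-moment identity $\mathbb{E}[\Phi_{ij}^4] = 1/g_j$ for the matrix in~(\ref{eqn:sparse_approx}), which is a direct calculation: $\mathbb{E}[\Phi_{ij}^4] = (1/g_j)^2 \cdot g_j = 1/g_j$, and this is exactly what Proposition~\ref{prop:1}'s variance formula was built on, so no new work is needed there.
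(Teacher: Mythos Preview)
Your variance bound is the same as the paper's, but the concentration step has a genuine gap. After substituting the prescribed $\ell$ into Chebyshev you obtain a per-event failure probability of
\[
\frac{c^2}{48\,k^2(1+\gamma)\log N},
\]
and you then assert this is ``at most $N^{-(1+\gamma)}$ for the relevant parameter regime.'' It is not: this quantity decays only like $1/\log N$, not like $N^{-(1+\gamma)}$. Chebyshev alone gives at best an inverse-polynomial tail in $\ell$, so to make each failure probability $N^{-(1+\gamma)}$ you would need $\ell$ of order $N^{1+\gamma}$, not of order $\log N$. The union bound therefore does not close, and the issue is not bookkeeping over constants --- it is an exponential-vs-polynomial mismatch.

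The paper avoids this by a two-stage (median-of-means) construction. It writes $\ell = \ell_1\ell_2$, partitions the rows of $\Phi$ into $\ell_2$ blocks of $\ell_1$ rows each, and forms $\ell_2$ independent estimators $z_1,\dots,z_{\ell_2}$. Chebyshev (with exactly your variance bound) is applied \emph{per block} with $\ell_1 = 4(2+\mu^2\max_j\tfrac{1}{g_j})/\epsilon^2$, which drives each block's failure probability down only to a constant, say $1/4$. The exponential tail comes from taking the median of the $\ell_2$ block estimates: the median lies outside the tolerance interval only if more than half the $z_m$'s do, and a standard Chernoff/Hoeffding argument (cited from~\cite{distributed}) gives failure probability at most $e^{-c^2\ell_2/12}$. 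Choosing $\ell_2 = 12(1+\gamma)\log N/c^2$ then yields the $N^{-(1+\gamma)}$ per-vector bound, the union bound over the $N$ vectors $v_i$ closes at $N^{-\gamma}$, and $\ell_1\ell_2$ produces the constant $48$. This median boosting is the missing ingredient in your argument; it also explains the constant $c$ and why no $k^2$ appears in the proof of this lemma (the $k^2$ enters only later, in Proposition~\ref{prop:3}, via $\beta = O(\epsilon\eta/k)$).
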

\begin{proof}[Proof of Lemma~\ref{prop:2}]
Consider two vectors $u,v\in \mathbb{R}^{N}$ that satisfies condition (\ref{eqn:cond_sparse}). Set $\ell=\ell_1\ell_2$ where $\ell_1$ and $\ell_2$ are two positive integers. Partition the $\ell\times N$ matrix $\Phi$ into $\ell_2$ matrices {$\Phi_1,...,\Phi_{\ell_2}$} each of size $\ell_1 \times N$. Create the random projections $\{x_1=\frac{1}{\sqrt{\ell_1}}\Phi_1 u,...,x_{\ell_2}=\frac{1}{\sqrt{\ell_1}}\Phi_{\ell_2} u\}$ and $\{y_1=\frac{1}{\sqrt{\ell_1}}\Phi_1 v,...,y_{\ell_2}=\frac{1}{\sqrt{\ell_1}}\Phi_{\ell_2} v\}$

Let us define the independent random variables $z_1,...,z_{\ell_2}$, where $z_{\ell}=x_{\ell}^Ty_{\ell}$. We now apply Proposition~\ref{prop:1} to each $z_{\ell}$ and find that,
$\mathbb{E}[z_{\ell}]=u^Ty$ and
\begin{eqnarray}
\mathbb{E}[z_{\ell}]&=&u^Tv \mbox{ and}\nonumber\\
Var(z_{\ell})&=&\frac{1}{\ell_1}((u^Tv)^2+||u||_{2}^2||v||_{2}^{2}+\sum_{j=1}^{N}\frac{1}{g_j}u_j^2 v_j^2\nonumber\\&-&3\sum_{j=1}^{N} u_j^2 v_j^2).\nonumber
\end{eqnarray}

Using Chebyshev inequality it can be shown that,
\begin{eqnarray}
&P&(|z_{\ell}-u^Tv| \geq \epsilon||u||_2||v||_2) \leq \frac{Var(z_{\ell})}{\epsilon^2||u||_2^2||v||_2^2}\nonumber \\
&=& \frac{1}{\epsilon^2 \ell_1} 
(\frac{(u^Tv)^2}{||u||_2^2||v||_2^2}+\frac{||u||_2^2||v||_2^2}{||u||_2^2||v||_2^2})\nonumber \\&+&\frac{\sum_{j=1}^{N}\frac{1}{g_j}u_j^2 v_j^2-3\sum_{j=1}^{N} u_j^2 v_j^2}{||u||_2^2||v||_2^2}\nonumber \\
& \leq & \frac{1}{\epsilon^2 \ell_1} (1+1+\frac{\mu^2||u||_2^2||v||_2^2\max_j\frac{ 1}{g_j}}{||u||_2^2||v||_2^2})\nonumber \\ && [\text{since} ||u||_{\infty} \leq \mu ||u||_2] \nonumber \\
&=&\frac{1}{\epsilon^2 \ell_1}(2+\mu^2\max_j\frac{ 1}{g_j}) \triangleq \delta. \nonumber
\end{eqnarray}
Therefore, we can obtain a constant probability $\delta$ by setting $\ell_1=O(\frac{2+\mu^2\sum_{j=1}^N\frac{1}{g_j}}{\epsilon^2})$.

Therefore, we can obtain a constant probability, say $\delta=\frac{1}{4}$, by setting $\ell_1=4*\frac{2+\mu^2\max_j\frac{1}{g_j}}{\epsilon^2}$.

Now it can be shown that~\cite{distributed} for any pair of vectors $u$ and $v_i\in\{v_1,..,v_n\}$, the random projections $\frac{1}{\ell} \Phi u$ and $\frac{1}{\ell}\Phi v_i$ produce an estimate $\hat{a_i}$ for $u^Tv_i$ that lies outside the tolerable approximation interval with probability at most  $e^{-c^2\ell_2/12}$, where $0<c<1$ is some constant. Taking union bound over all such vectors, the probability that at least one $\hat{a_i}$ lies outside the tolerable interval with probability is upper bounded by 
\begin{eqnarray}
P_e&\leq& N e^{-c^2\ell_2/12}\nonumber\\
\log{(P_e)} &\leq& \log{N} -c^2\ell_2/12 \hspace{1 cm}\nonumber\\
\mbox{Let}, \log{(P_e)}&=&-\gamma \log{N}\nonumber\\
\log{N}{(1+\gamma)}&\leq& c^2\ell_2/12\nonumber \\                                                                            \ell_2&\geq&\frac{12{(1+\gamma)\log{N}}}{c^2}.\nonumber
\end{eqnarray}
Now, 
\begin{eqnarray}
\ell&=&\ell_1\ell_2\nonumber\\
&=& 4*\frac{2+\mu^2\max_j\frac{1}{g_j}}{\epsilon^2} * \frac{12(1+\gamma)\log{N}}{c^2}\nonumber\\
&=&48\frac{(2+\mu^2\max_j\frac{1}{g_j}){(1+\gamma)\log{N}}}{c^2\epsilon^2}.\nonumber
\end{eqnarray}
\end{proof}

\begin{proof}[Proof of Proposition~\ref{prop:3}]
Consider an orthonormal transform $\Psi\in \mathbb{R}^{N \times N}$. Let us represent the transform coefficients using $\theta=[u^T\psi_1,...,u^T\psi_N]^T$. Reordering the transform coefficients $\theta$ in decreasing of magnitude, i.e., $|\theta|_{(1)}\geq |\theta|_{(2)}....\geq |\theta|_{(N)}$, the approximation error by taking the largest $k$ coefficients in magnitude, and setting the remaining coefficients to zero is given by   $||\theta-\hat{\theta}_{opt}||_2^2 =\sum_{i=k+1}^N|\theta|_{(i)}^2$. Let $||\theta-\theta_{opt}||_2^2 \leq \eta||\theta||_2^2$ and assume that $u$ satisfies condition~(\ref{eqn:cond_sparse}), with positive integer, $\ell= 48\frac{(2+\mu^2\max_j\frac{1}{g_j})k^2{(1+\gamma)\log{N}}}{c^2\beta^2}.$
The random projections $\frac{1}{\sqrt{\ell}}\Phi u$ and $\{\frac{1}{\sqrt{\ell}}\Phi\psi_1,....,.\frac{1}{\sqrt{\ell}}\Phi\psi_n\}$ can produce estimates $\{\hat{\theta_1},...,\hat{\theta_N}\}$, where the estimates satisfy $|\hat{\theta_i}-\theta_i| \leq \beta ||\theta||_2$ with high probability (Lemma~\ref{prop:2}).

It can be shown that~\cite{distributed} for $\beta=O(\frac{\epsilon \eta}{k})$, the approximate error is: $||u-\hat{u}||_2^2=(1+\epsilon)\eta||u||_2^2.$ Therefore the number of random projections can be given by
\begin{eqnarray}
\ell= 48\frac{(2+\mu^2\max_j\frac{1}{g_j})k^2{(1+\gamma)\log{N}}}{c^2\epsilon^2\eta^2}.\nonumber
\end{eqnarray}
\end{proof}

\end{document}